\newcommand{\ZH}[3]{\ensuremath{\mathcal{Z}_{\{#1\}}^{#2 \times #2}(#3)}}
\newcommand{\HH}[3]{\ensuremath{\mathcal{H}_{\{#1\}}^{#2 \times #2}(#3)}}
\title{Bohemian Upper Hessenberg Matrices}
\author{
  Eunice Y. S. Chan\thanks{Department of Applied Mathematics, Western University
    (\email{echan295@uwo.ca}, 
    \email{rcorless@uwo.ca},
    \email{sthornt7@uwo.ca}).}
  \and
  Robert M. Corless\footnotemark[1]
  \and
  Laureano Gonzalez-Vega\thanks{Departamento de Matematicas, Estadistica y Computacion, Universidad de Cantabria
  (\email{laureano.gonzalez@unican.es}).}
  \and
  J.~Rafael Sendra\thanks{Research Group ASYNACS, Departamento de F{\'i}sica y Matem\'{a}ticas, University of Alcal{\'a}
  (\email{rafael.sendra@uah.es}).}
  \and
  Juana Sendra\thanks{Universidad Polit{\'{e}}cnica de Madrid
  (\email{jsendra@etsist.upm.es}).}
  \and
  Steven E. Thornton\footnotemark[1]
}
\begin{document}

\maketitle

\begin{abstract}
We look at Bohemian matrices, specifically those with entries from $\{-1, 0, {+1}\}$. More, we specialize the matrices to be upper Hessenberg, with subdiagonal entries $\pm1$. Many properties remain after these specializations, some of which surprised us. We find two recursive formulae for the characteristic polynomials of upper Hessenberg matrices.
Focusing on only those matrices whose characteristic polynomials have maximal height allows us to explicitly identify these polynomials and give a lower bound on their height. This bound is exponential in the order of the matrix. We count \textsl{stable} matrices, normal matrices, and neutral matrices, and tabulate the results of our experiments. We prove a theorem about the only possible kinds of normal matrices amongst a specific family of Bohemian upper Hessenberg matrices. 
\end{abstract}

\section{Introduction}
A matrix family is called \textbf{Bohemian} if its entries come from a fixed finite discrete (and hence bounded) set, usually integers. The name is a mnemonic for \textbf{Bo}unded \textbf{He}ight \textbf{M}atrix of \textbf{I}ntegers. Such populations arise in many applications (e.g.~compressed sensing) and the properties of matrices selected ``at random'' from such families are of practical and mathematical interest. For example, Tao and Vu have shown that random matrices (more specifically real symmetric random matrices in which the upper-triangular entries $\xi_{i, j}$, $i < j$ and diagonal entries $\xi_{i, i}$ are independent) have simple spectrum~\cite{tao2017random}. An overview of some of our original interest in Bohemian matrices can be found in~\cite{corless2017bohemian}.

Bohemian families have been studied for a long time, although not under that name. For instance, Olga Taussky-Todd's paper ``Matrices of Rational Integers"~\cite{taussky1960matrices} begins by saying
\begin{quote}
    ``This subject is very vast and very old. It includes all of the arithmetic theory of quadratic forms, as well as many of other classical subjects, such as latin squares and matrices with elements $+1$ or $-1$ which enter into Euler's, Sylvester's or Hadamard's famous conjectures."
\end{quote}
The paper~\cite{gear1969simple} by C.~W.~Gear is another instance. What is new here is the idea that these families are themselves interesting objects of study, and susceptible to brute-force computational experiments as well as to asymptotic analysis. These experiments have generated many conjectures, some of which we resolve in this paper.  Others remain unsolved, and are listed on the Characteristic Polynomial Database~\cite{CPDB}. Many of the conjectures have a number-theoretic or combinatorial flavour.

Typical computational puzzles arise on asking simple-looking questions such as ``how many $6 \times 6$ matrices with the population\footnote{The population of a Bohemian family is the set of permissible entries.} $\{-1, 0, {+1}\}$ are singular.'' The answer is not known as we write this, although we can give a probabilistic estimate ($0.205$ after $20,000,000$ sample determinants\footnote{4103732 singular matrices out of twenty million sampled.}): brute computation seems futile because there are $3^{36} \doteq 1.7\times10^{17}$ such matrices. We do know the answers up to size five by five: The number of $n$ by $n$ singular matrices with population $\{-1, 0, {+1}\}$ is, for $n=1$, $2$, $3$, $4$, and $5$, 
just $1$, $33$, $7,875$, $15,099,201$, and $237,634,987,683$.
This represents fractions of their numbers ($3^{n^2}$) of $0.333$, $0.407$, $0.400$, $0.351$, and $0.280$,
respectively.

Yet such matrix families  are both useful and interesting. For instance, one may use discrete optimization over a family to look for improved growth factor bounds~\cite{higham2018bohemian}. Matrices with the population $\{-1, 0, {+1}\}$ have minimal height\footnote{$\mathrm{height}(A) := ||\mathrm{vec}(A) ||_{\infty}$ is the largest absolute value of any entry in $A$.} over all integer matrices; finding a matrix in this family which has a given polynomial $p(\lambda) \in \mathbb{Z}[\lambda]$ as characteristic polynomial identifies a so-called ``minimal height companion matrix'', which may confer numerical benefits.

Recently the study of eigenvalues of structured Bohemian matrices (e.g.~tridiagonal, complex symmetric) has been undertaken and several puzzling features are seen resulting from extensive experimental computations. For instance, some of the images at \href{http://www.bohemianmatrices.com/gallery}{bohemianmatrices.com/gallery} show common features including ``holes''.

Different matrix structures produce remarkably different pictures. One structure useful in eigenvalue computation is the upper Hessenberg matrix, which means a matrix $\mathbf{H}$ such that $h_{i, j} = 0$ if $i > j + 1$. These arise naturally in eigenvalue computation because the QR iteration is cheaper for matrices in Hessenberg form. 
Results on the determinants of Hessenberg matrices can be found in \cite{kaygisiz2012determinant}.

\begin{remark}
\textit{on computing eigenvalues by first computing characteristic polynomials.}
Numerical analysts are familiar with the superior numerical stability of computing eigenvalues iteratively, usually by the QR algorithm or some variant, rather than first computing characteristic polynomials and then finding roots. As is well-known, such an algorithm is \textsl{numerically unstable} because polynomials are usually badly-conditioned while eigenvalues are usually well-conditioned\footnote{This has been well-known to the point of folklore since the work of Wilkinson.  The well-conditioning of eigenvalues has only recently been quantified in some cases, but for instance the results of~\cite{beltran2017polynomial} do confirm the folklore.}.  Somewhat surprisingly, for several families of Bohemian matrices, characteristic polynomials become valuable again: first because the matrix dimensions are typically small or at most moderate, the ill-conditioning does not matter much, and second because for some families (not all!) the number of distinct characteristic polynomials is vastly smaller than the number of matrices in the family.  For instance, for the general five by five matrices with population $\{-1, 0, {+1}\}$, there are nearly one trillion such matrices, but fewer than two million characteristic polynomials.  This compression is significant.

For other families of matrices, such as upper Hessenberg Toeplitz matrices, there is no compression at all because each matrix has a distinct characteristic polynomial.  Circulant matrices fall between, having fewer characteristic polynomials but not vastly fewer.  The lesson is that for some questions (though not others), prior computation of characteristic polynomials is valuable.
\end{remark}

We begin our study in this paper by considering determinants of Bohemian upper Hessenberg matrices. We prove two recursive formulae for the characteristic polynomials of upper Hessenberg matrices\footnote{We do not claim originality; recursion relations for upper Hessenberg determinants are known.}. For another recursive formula we refer to~\cite{elouafi2009recursion}. During the course of our computations, we encountered ``maximal polynomial height'' characteristic polynomials when the matrices were not only upper Hessenberg, but Toeplitz ($h_{i,j}$ constant along diagonals $j-i = k$); we have several results for such matrices, which will appear in \cite{chan2018BUHT}. Further restrictions to this class allowed identification of key results including explicit formulae for the characteristic polynomials of maximal height. In what follows, we lay out definitions and prove several facts of interest about characteristic polynomials and their respective height for these families.

In Figure~\ref{fig:UH_6} we see all eigenvalues of $6 \times 6$ upper Hessenberg matrices, subdiagonals fixed at $-1$, with population $P = \{-1, 0, {+1}\}$.
We denote this set of matrices $\mathcal{H}_{\{\pi\}}^{6\times 6}(P)$.  There are $3^{21}=10,460,353,203$ such matrices.  We see a wide octagonal shape.  The width of the figure reflects that some matrices might have diagonals $-1$, while some have diagonals $0$, and others have diagonals $1$.  Of course mixed diagonals are also possible, but this should only tend to push things towards the centre.

This thinking motivates considering the subset of these matrices which has diagonal fixed at $0$.  
We denote this set of matrices $\mathcal{Z}_{\{\pi\}}^{6 \times 6}(P)$. There are substantially fewer such matrices, only $3^{15}=14,348,907$ to be exact, and their eigenvalues are pictured in Figure~\ref{fig:UH_6_0_Diag}, with certain zones enhanced.  We see that, roughly speaking, Figure~\ref{fig:UH_6} is partially explained by saying that, along with other eigenvalues, it contains three copies of Figure~\ref{fig:UH_6_0_Diag} placed with centres at $-1$, at $0$, and at ${+1}$.

In this paper we seek to explain some of the features of these pictures, and to learn some things about these families of Bohemian matrices.

\begin{figure}
    \centering
    \includegraphics[width=\textwidth]{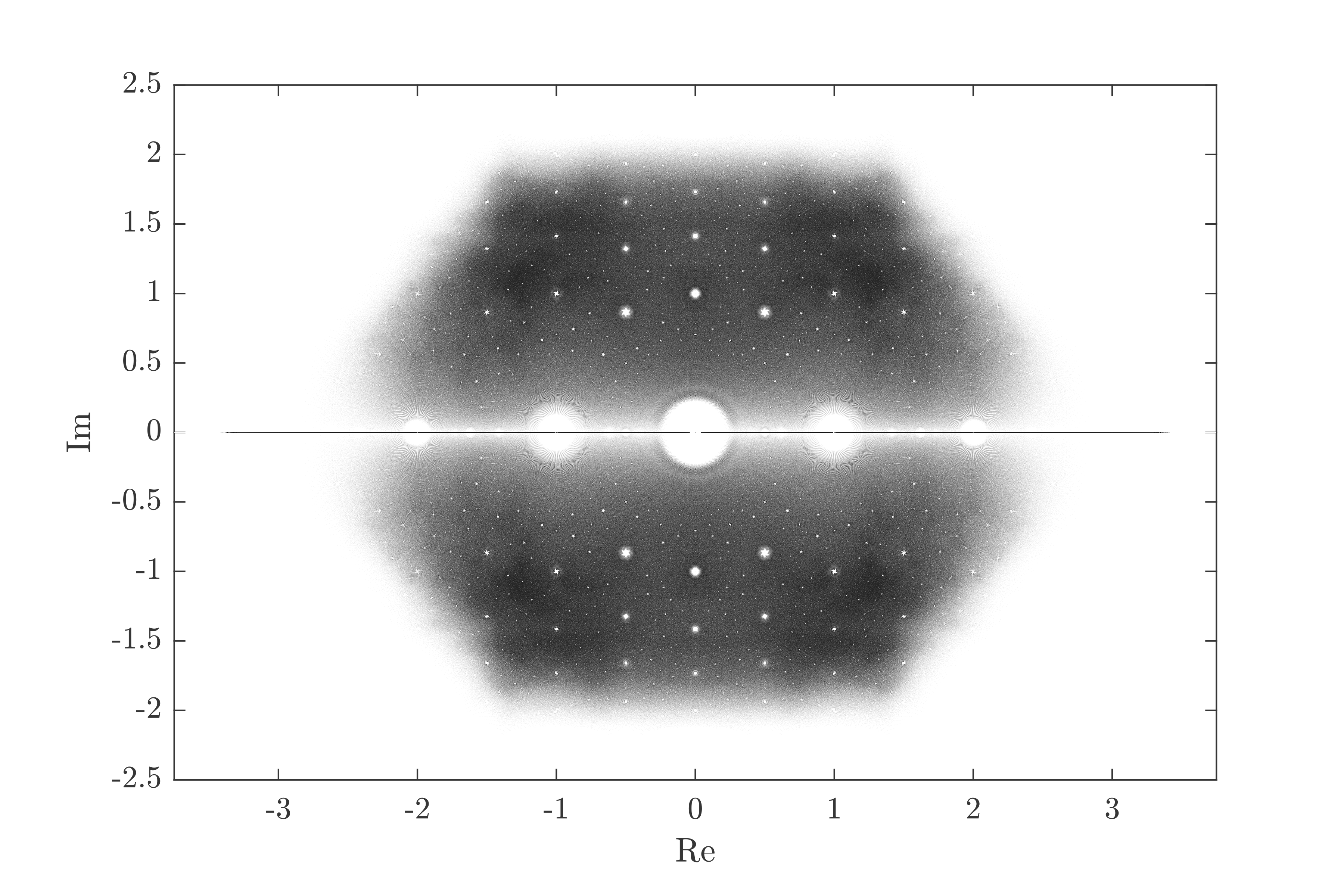}
    \caption{The set of eigenvalues of all $10,460,353,203$ six by six upper Hessenberg matrices $\mathbf{H}$  with entries $\mathbf{H}_{i,j} \in \{-1, 0, {+1}\}$, and $\mathbf{H}_{i+1,i} = -1$ for $1 \le i \le j \le 6$. A more detailed image can be found at \url{assets.bohemianmatrices.com/gallery/UH_6x6.png}}
\label{fig:UH_6}
\end{figure}

\begin{figure}
    \centering
    \includegraphics[width=\textwidth]{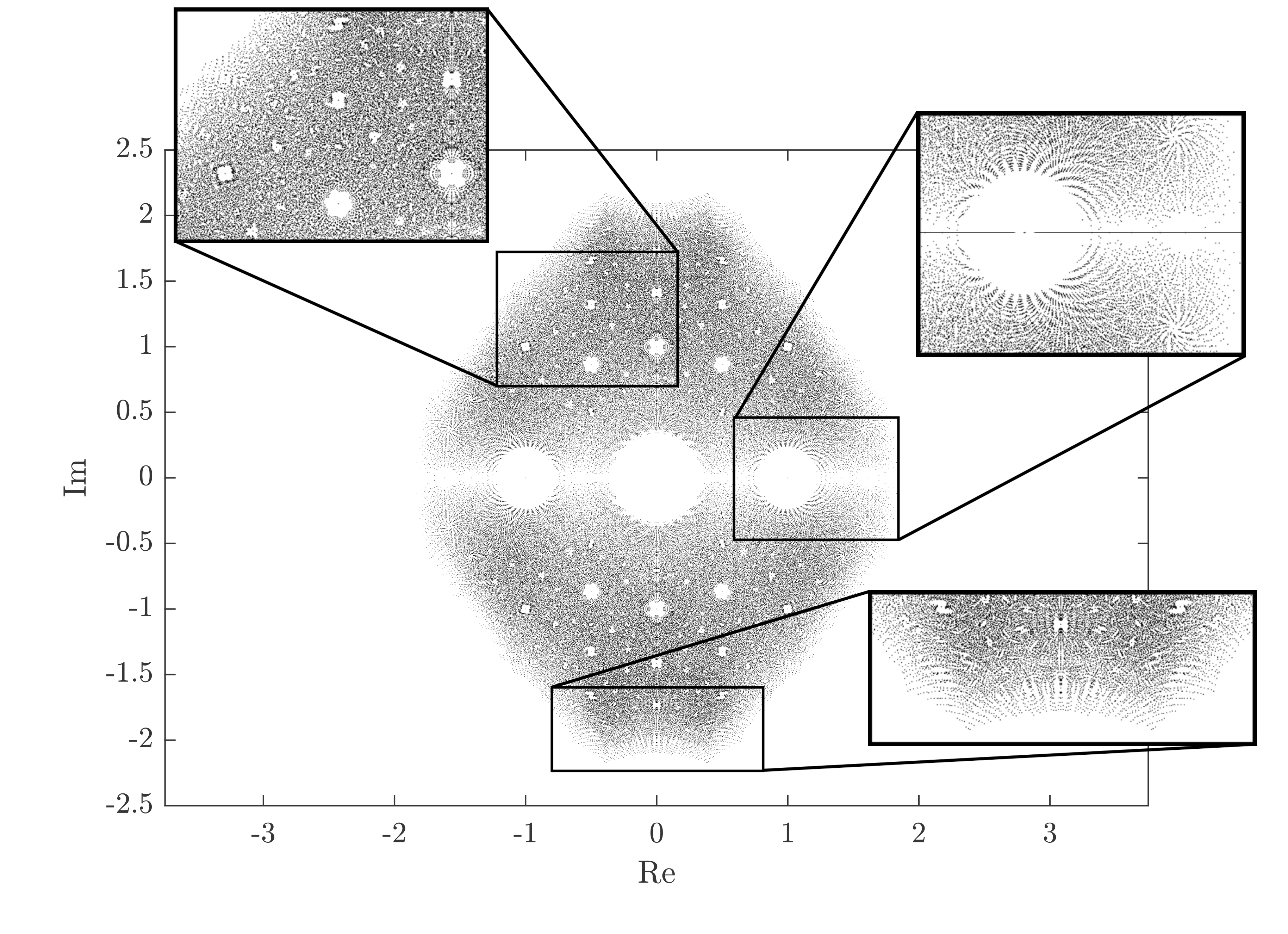}
    \caption{The set of eigenvalues of all $14,348,907$ matrices in \ZH{\pi}{6}{\{-1, 0, {+1}\}}; that is, six by six upper Hessenberg matrices $\mathbf{H}$ with entries $\mathbf{H}_{i,j} \in \{-1, 0, {+1}\}$, diagonal entries fixed as zero, and $\mathbf{H}_{i+1,i} = -1$ for $1 \le i < j \le 6$. A more detailed image can be found at \url{assets.bohemianmatrices.com/gallery/UH_0_Diag_6x6.png}}
\label{fig:UH_6_0_Diag}
\end{figure}

\section{Prior Work}
Visible features of graphs of roots and eigenvalues from structured families of polynomials and matrices have been previously studied. One well-known polynomial whose roots produce interesting pictures is the Littlewood polynomial,
\begin{equation}
    p(x) = \sum_{i = 0}^{n}a_{i}x^{i} \>,
\end{equation}
where $a_{i} = \{-1, {+1}\}$. These polynomials have been studied in \cite{baez2009beauty}, \cite{borwein2012computational}, \cite{borwein2001visible}, and \cite{borwein1997polynomials}. The image of their roots raises many questions, ranging from whether the set is (ultimately, as $n \to \infty$) a fractal and what the boundary of the set is, to questions about the holes in the image and its connection to various properties, such as degree and coefficients of the polynomial. Answers to some of these questions, particularly the ones involving the holes, have been shown to have some significance in number theory~\cite{beaucoup1998multiple}. Roots of other polynomials have also been visualized; for more, see Christensen's\footnote{\url{https://jdc.math.uwo.ca/roots/}} and J{\"o}rgenson's\footnote{\url{http://www.cecm.sfu.ca/~loki/Projects/Roots/}} web pages.

Corless used a generalization of the Littlewood polynomial (to Lagrange bases). In his paper \cite{corless2004generalized}, he gave a new kind of companion matrix for polynomials expressed in a Lagrange basis. He used generalized Littlewood polynomials as test problems for the algorithm.

``The Bohemian Eigenvalue Project" was first presented as a
poster~\cite{eccadposter2015} at the East Coast Computer Algebra Day (ECCAD) 2015. The poster focused on preliminary results and many of the questions raised when visualizing the distributions of Bohemian eigenvalues over the complex plane. In particular, the poster focused on ``eigenvalue exclusion zones'' (i.e. distinct regions within the domain of the eigenvalues where no eigenvalues exist), computational methods for visualizing eigenvalues, and some results on eigenvalue conditioning over distributions of random matrices.

In Chan's Master's thesis~\cite{chan2016comparison}, she extended Piers W.~Lawrence's construction of the companion matrix for the Mandelbrot polynomials~\cite{corless2013largest, corlessMandelbrot} to other families of polynomials, mainly the Fibonacci-Mandelbrot polynomials and the Narayana-Mandelbrot polynomials. What is relevant here about this construction is that these matrices are upper Hessenberg and contain entries from a constrained set of numbers: $\{-1, 0\}$, and therefore fall under the category of being Bohemian upper Hessenberg. Both the Fibonacci-Mandelbrot matrices and Narayana-Mandelbrot matrices are also Bohemian upper Hessenberg, but the set that the entries draw from is $\{-1, 0, {+1}\}$. At the time of submission for Chan's Master's thesis, the largest number of eigenvalues successfully computed (using a machine with 32 GB of memory) were $32,767$, $17,710$, and $18,559$ for the Mandelbrot, Fibonacci-Mandelbrot, and Narayana-Mandelbrot matrices, respectively. This makes the \nth{16} Mandelbrot matrix the ``largest" Bohemian matrix that we have solved at the time we write this paper.


These new constructions led Chan and Corless to a new kind of companion matrix for polynomials of the form $c(z) = z a(z) b(z) + c_0$.  A first step towards this was first proved using the Schur complement in \cite{chan2017new}. 
Knuth then suggested that Chan and Corless look at the Euclid polynomials~\cite{chan2017minimal}, based on the Euclid numbers. 
It was the success of this construction that led to the realization that this construction is general, and gives a genuinely new kind of companion matrix.
Similar to the previous three families of matrices, the Euclid matrices are also upper Hessenberg and Bohemian, as the entries are comprised from the set $\{-1, 0, +1\}$. In addition, an interesting property of these companion matrices is that their inverses are also Bohemian with the same population, a property which we call ``the matrix family having \emph{rhapsody}~\cite{chan2017constructing}.''

As an extension of this generalization, Chan et al.~\cite{chan2017constructing} showed how to construct linearizations of matrix polynomials, particularly of the form $z\mathbf{a}(z)\mathbf{d}_0 + \mathbf{c}_0$, $\mathbf{a}(z)\mathbf{b}(z)$, $\mathbf{a}(z) +\mathbf{b}(z)$ (when $\mathrm{deg}(\mathbf{b}(z)) < \mathrm{deg}(\mathbf{a}(z))$, and $z\mathbf{a}(z)\mathbf{d}_0\mathbf{b}(z) + \mathbf{c}_0$, using a similar construction.

\section{Notation}
In what follows, we present some results on upper Hessenberg Bohemian matrices of the form
\begin{equation}
    \mathbf{H}_n =
    \renewcommand{\arraystretch}{1.3}
    \begin{bmatrix}
        h_{1,1} & h_{1,2}    & h_{1,3}     & \cdots & h_{1,n}\\
        s   & h_{2,2}    & h_{2,3}    & \cdots & h_{2,n}\\
        0     & s    & h_{3,3}    & \cdots & h_{3,n}\\
        \vdots & \ddots & \ddots & \ddots & \vdots\\
        0      & \cdots & 0       & s    & h_{n,n}
    \end{bmatrix}
\end{equation}
with $s = \exp(i\theta_k)$, usually $s  \in \{-1, {+1}\}$ (we do not allow zero subdiagonal entries, because that reduces the problem to smaller ones) and $h_{i,j} \in \{-1, 0, {+1}\}$ for $1 \le i \le j \le n$. We denote the characteristic polynomial $Q_n(z) \equiv \det (z \mathbf{I} - \mathbf{H}_n)$.

\begin{definition}
    The set of all $n \times n$ Bohemian upper Hessenberg matrices with upper triangle population $P$ and subdiagonal population from a discrete set of roots of unity, say $s\in \{e^{i\theta_{k}}\}$ where $\{\theta_{k}\}$ is some finite set of angles, is called $\mathcal{H}_{\{\theta_{k}\}}^{n\times n}(P)$. In particular, $\mathcal{H}_{\{0\}}^{n \times n}(P)$ is the set of all $n \times n$ Bohemian upper Hessenberg matrices with upper triangle entries from $P$ and subdiagonal entries equal to $1$ and $\mathcal{H}_{\{\pi\}}^{n \times n}(P)$ is when the subdiagonals entries are $-1$.
\end{definition}

It will often be true that the average value of a population will be zero.  In that case, matrices with trace zero will be common.  It is a useful oversimplification to look in that case at matrices whose diagonal is exactly zero. 

\begin{definition}
    For a population $P$ such that $0 \in P$, let $\ZH{\theta_k}{n}{P}$ be the subset of $\HH{\theta_k}{n}{P}$ where the main diagonal entries are fixed at 0.
\end{definition}

\section{Results of Experiments}
The methods used for computing the characteristic polynomials and counting the number of eigenvalues presented in Tables~\ref{tab:properties_UHTfulldiagonal}--\ref{tab:nilpotents} in this section will be discussed in detail in a forthcoming paper.  Many of the smaller-dimension computations were done directly in \textsc{Maple} 2017; for instance, computation of the characteristic polynomials of all two million or so matrices in \HH{0}{5}{\{0, {+1}\}} took about six hours on a Surface Pro. The greater number of higher-dimension matrices, or matrices with larger populations, required special techniques and larger \& faster machines.  Eigenvalue computations were also done in \textsc{Matlab} and in Python.  The computed characteristic polynomials are available through the Characteristic Polynomial Database~\cite{CPDB}.


\begin{table}[ht]
    \centering
    \begin{tabular}{|c|c|c|c|c|}
        \hline
        $n$ & \#matrices & \#cpolys &  \#neutral polys & \#neutral matrices \\
        \hline
         2 & 27 &  16&  2 & 4 \\
         \hline
         3 & 729  & 166 & 3 & 24 \\
         \hline
         4 & 59,049 & 3,317 & 7 & 332 \\
         \hline
         5 & 14,348,907 & 133,255 &  11 & 9,909 \\
         \hline
         6 & 10,460,353,203 & 10,872,459 & 25 & 696,083 \\
         \hline
    \end{tabular}
    \caption{Some properties of matrices in \HH{0}{n}{\{-1, 0, {+1}\}}. The \#matrices column reports the number of distinct matrices at each dimension.  The \#cpolys column reports the number of distinct characteristic polynomials at each dimension.  The \#neutral polys reports the number of characteristic polynomials where all roots have zero real part.  The \#neutral matrices column reports the number of matrices where all eigenvalues have zero real part.}
    \label{tab:properties_UHTfulldiagonal}
\end{table}
%
%
%
\begin{table}[ht]
    \centering
    \begin{tabular}{|c|c|c|c|c|}
        \hline
        $n$ & \#matrices & \#cpolys &  \#neutral polys & \#neutral matrices \\
        \hline
         2 & 3 & 3 & 2 & 2\\
         \hline
         3 & 27 & 15 & 3 & 6 \\
         \hline
         4 & 729 & 140 &  7 & 66 \\
         \hline
         5 & 59,049 & 2,297 & 11 & 1,069 \\
         \hline
         6 & 14,348,907 & 67,628 & 25 & 45,375 \\
         \hline
         7 & 10,460,353,203 & 3,606,225 & 45 & 4,105,977\\
         \hline
    \end{tabular}
    \caption{Some properties of matrices in \ZH{0}{n}{\{-1, 0, {+1}\}}. The \#matrices column reports the number of distinct matrices at each dimension.  The \#cpolys column reports the number of distinct characteristic polynomials at each dimension. The \#neutral polys reports the number of characteristic polynomials where all roots have zero real part.  The \#neutral matrices column reports the number of matrices where all eigenvalues have zero real part.}
    \label{tab:properties_UHT(3)}
\end{table}

\begin{table}[ht]
    \centering
    \begin{tabular}{|c|c|c|c|c|c|c|}
        \hline
         $n$ & multiplicity 1 & ${2}$ & ${3}$ & ${4}$ & ${5}$ & ${6}$\\
         \hline
         2 & 5 & 1 & & & & \\
         \hline
         3 & 35 & 0 & 1 & & & \\
         \hline
         4 & 431 & 5 & 0 & 1 & & \\
         \hline
         5 & 9,497 & 9 & 3 & 0 & 1 & \\
         \hline
         6 & 363,143 & 51 & 5 & 1 & 0 & 1 \\
         \hline
    \end{tabular}
    \caption{Number of distinct eigenvalues of various multiplicities of matrices in \ZH{0}{n}{\{-1, 0, {+1}\}}. Most eigenvalues are simple.  It turns out that every multiple eigenvalue also occurs as a simple eigenvalue for some other matrix.  The only $n$-multiple eigenvalue of the class of $n$ by $n$ matrices is, of course, $\lambda = 0$.}
    \label{tab:num_eigs(3)}
\end{table}

\begin{table}[h]
    \centering
    \begin{tabular}{|c|c|c|c|c|c|}
        \hline
        $n$ & \#matrices & \#cpolys & \#distinct real $\lambda$ & \#neutrals polys & \#neutral matrices \\
        \hline
         2 & 8 & 6 & 6 & 1 & 1\\
         \hline
         3 & 64 & 28 & 25 & 1 & 1\\
         \hline
         4 & 1,024 & 197 & 219 & 1 & 1\\
         \hline
         5 & 32,768 & 2,235 & 3,264 &  1 & 1\\
         \hline
         6 & 2,097,152 & 39,768 & 75,045 & 1 & 1\\
         \hline
         7 & 268,435,456 & 1,140,848 & 2,694,199 & 1 & 1\\
         \hline
    \end{tabular}
    \caption{Some properties of matrices in \HH{0}{n}{\{0, {+1}\}}. The \#matrices column reports the number of distinct matrices at each dimension.  The \#cpolys column reports the number of distinct characteristic polynomials at each dimension.
    The \#distinct real $\lambda$ column reports the number of distinct real eigenvalues in \HH{0}{n}{\{0, {+1}\}}.
    The \#neutral polys reports the number of characteristic polynomials where all roots have zero real part (here only $z^n$). We conjecture that this is always so (and that there is only one matrix for that neutral polynomial).  The \#neutral matrices column reports the number of matrices where all eigenvalues have zero real part.}
    \label{tab:properties_UHT(0-1 poly)}
\end{table}

\begin{table}[h]
    \centering
    \begin{tabular}{|c|c|c|c|c|c|c|}
        \hline
         $n$ & multiplicity 1 & ${2}$ & ${3}$ & ${4}$ & ${5}$ & ${6}$ \\
         \hline
         2 & 6 & 2 & & & & \\
         \hline
         3 & 43 & 2 & 2 & & & \\
         \hline
         4 & 413 & 6 & 2 & 2 & & \\
         \hline
         5 & 6,920 & 6 & 3 & 2 & 2 & \\
         \hline
         6 & 166,005 & 45 & 6 & 2 & 2 & 2 \\
         \hline
    \end{tabular}
    \caption{Number of distinct eigenvalues of various multiplicities matrices in $\HH{0}{n}{\{0, {+1}\}}$.  Note that in this class of matrices,  diagonal entries of the matrix need not be zero.}
    \label{tab:num_eigs(zero-one)}
\end{table}

\begin{table}[h]
    \centering
    \begin{tabular}{|c|c|c|c|c|c|c|}
        \hline
        $n$ & \#matrices & \#cpolys & \#stables & \#neutral polys & \#neutral matrices & \#distinct real $\lambda$ \\
        \hline
         2 & 8 & 6 & 1 & 1 & 2 & 5\\
         \hline
         3 & 64 & 32 & 3 & 0 & 0 & 29 \\
         \hline
         4 & 1,024 & 289 & 14 & 1& 6 & 233 \\
         \hline
         5 & 32,768 & 4,958 & 93 & 0 & 0 & 	7,363\\
         \hline
         6 & 2,097,152 & 162,059 & 992 & 2 & 430 & 299,477\\
         \hline
         7 & 268,435,456 & 10,318,948 & & 0 & 0 &\\
         \hline
    \end{tabular}
    \caption{Some properties of matrices from $\HH{0}{n}{\{-1, {+1}\}}$. The column \#stables reports the number of characteristic polynomials with all roots in the left half plane; the corresponding number of \textsl{matrices} is $1$, $4$, $28$, $424$, and $11,613$. Other columns are as in previous tables. Blank table entries represent unknowns.}
    \label{tab:properties_UHT(Bernoulli)}
\end{table}
\begin{table}[h]
    \centering
    \begin{tabular}{|c|c|c|c|c|c|c|}
        \hline
         $n$ & multiplicity 1 & ${2}$ & ${3}$ & ${4}$ & ${5}$ & ${6}$ \\
         \hline
         2 & 9 & 1 & & & & \\
         \hline
         3 & 65 & 0 & 0 & & & \\
         \hline
         4 & 689 & 5 & 0 & 0 & & \\
         \hline
         5 & 20,565 & 3 & 0 & 0 & 0 & \\
         \hline
         6 & 887,539 & 59 & 9 & 1 &  1  & 1 \\
         \hline
    \end{tabular}
    \caption{Number of distinct eigenvalues of various multiplicities of matrices from $\HH{0}{n}{\{-1, {+1}\}}$. The diagonal entries are not zero.}
    \label{tab:num_eigs(Bernoulli)}
\end{table}

Other questions than those answered in these tables can be asked of this data.  For instance, one might be interested in the proportion of singular matrices.  By asking which characteristic polynomials have zero constant coefficient, and counting the number of matrices that have that characteristic polynomial, one can answer such questions. In the case of six by six upper Hessenberg matrices with population $\{-1, {+1}\}$, there are $383,680$ singular matrices, or about $18.3\%$.  Recall that for ``random'' six by six matrices, where the entries are chosen perhaps uniformly over some real interval, the probability of singularity is \textsl{zero} because such matrices come from a set of measure zero.  Yet in applications, the probability of singular matrices is often \textsl{nonzero} because of structure.  By looking at Bohemian matrices, we get some idea of the  influence of structure for finite dimensions~$n$.

\section{Upper Hessenberg Matrices}
We can make sense of some of those experiments by theoretical results and proofs.  We begin with a recurrence relation for the characteristic polynomial $Q_n(z) = \det (z \mathbf{I} - \mathbf{H}_n)$ for $\mathbf{H}_n \in \HH{\theta_k}{n}{P}$ where $s = \exp(i\theta_k)$. Later we will specialize the population $P$ to contain only zero and numbers of unit magnitude, usually $\{-1,0,+1\}$.
\begin{theorem}
    \label{thm:charPolyRec1}
    \begin{equation}
        \label{eqn:thm1}
        Q_n(z) = zQ_{n-1}(z) - \sum_{k=1}^{n} s^{k-1} h_{n-k+1,n} Q_{n-k}(z)
    \end{equation}
    with the convention that $Q_0(z) = 1$ ($\mathbf{H}_0 = [\,]$, the empty matrix).
\end{theorem}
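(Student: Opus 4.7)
The natural plan is to compute $Q_n(z)=\det(z\mathbf{I}-\mathbf{H}_n)$ by a Laplace cofactor expansion along the last column, and then to exploit the Hessenberg structure to identify each minor with a product of a smaller characteristic polynomial and a simple triangular factor. The last column of $z\mathbf{I}-\mathbf{H}_n$ has entries $-h_{1,n},-h_{2,n},\ldots,-h_{n-1,n},z-h_{n,n}$, so the expansion reads
\begin{equation*}
    Q_n(z)=(z-h_{n,n})M_{n,n}+\sum_{i=1}^{n-1}(-h_{i,n})(-1)^{i+n}M_{i,n},
\end{equation*}
where $M_{i,n}$ denotes the minor obtained by deleting row $i$ and column $n$.

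The crux of the argument is analysing the block structure of this $(n-1)\times(n-1)$ submatrix. In the original matrix, row $r$ is zero strictly to the left of column $r-1$, so after deleting row $i$ and column $n$ the block in new rows $i,\ldots,n-1$ and new columns $1,\ldots,i-1$ (i.e.\ original rows $i+1,\ldots,n$ paired with original columns $1,\ldots,i-1$) vanishes identically. Hence the submatrix is block upper triangular, and
\begin{equation*}
    M_{i,n}=\det(z\mathbf{I}-\mathbf{H}_{i-1})\cdot\det(C_i),
\end{equation*}
where $C_i$ is the $(n-i)\times(n-i)$ block formed from original rows $i+1,\ldots,n$ and original columns $i,\ldots,n-1$. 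A second examination of the Hessenberg pattern shows $C_i$ is itself upper triangular, with each diagonal entry sitting on the subdiagonal of $\mathbf{H}_n$ and thus equal to $-s$, giving $\det(C_i)=(-s)^{n-i}$. Combined with the top-left factor this yields $M_{i,n}=(-s)^{n-i}Q_{i-1}(z)$, while the case $i=n$ gives $M_{n,n}=Q_{n-1}(z)$.

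Inserting these minors into the cofactor expansion and collecting signs, the overall sign $(-h_{i,n})(-1)^{i+n}(-1)^{n-i}$ collapses to $-h_{i,n}$, so
\begin{equation*}
    Q_n(z)=zQ_{n-1}(z)-\sum_{i=1}^{n}s^{n-i}h_{i,n}Q_{i-1}(z),
\end{equation*}
where the $i=n$ term absorbs the contribution $-h_{n,n}Q_{n-1}(z)$. A change of summation index $k=n-i+1$ then produces the formula of the theorem, and the convention $Q_0(z)=1$ is consistent with the empty-matrix base case. The only delicate step is the sign bookkeeping in the cofactor expansion together with verifying that the lower-left $(n-i)\times(i-1)$ block truly vanishes; both follow directly from the defining inequality $h_{r,c}=0$ for $r>c+1$ applied row by row.
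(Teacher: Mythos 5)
Your proof is correct, and it reaches the recurrence by a route that differs in execution from the paper's. The paper expands $\det(z\mathbf{I}-\mathbf{H}_n)$ along the last \emph{row}, which produces $(z-h_{n,n})Q_{n-1}(z)$ plus $s$ times a bordered determinant, and then evaluates that bordered determinant by a separate induction (equation~\eqref{eq:thm1induction}), peeling off one row at a time. You instead expand along the last \emph{column} and evaluate every minor $M_{i,n}$ in one shot by observing that deleting row $i$ and column $n$ leaves a block upper triangular matrix whose diagonal blocks are $z\mathbf{I}-\mathbf{H}_{i-1}$ and an upper triangular block $C_i$ with diagonal entries $-s$, so $M_{i,n}=(-s)^{n-i}Q_{i-1}(z)$. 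Your key steps check out: the lower-left $(n-i)\times(i-1)$ block vanishes since its entries have row index $r\ge i+1$ and column index $c\le i-1$, hence $r>c+1$; the diagonal of $C_i$ lies exactly on the subdiagonal of $z\mathbf{I}-\mathbf{H}_n$; and the signs $(-1)^{i+n}(-1)^{n-i}=1$ collapse as you claim, after which the reindexing $k=n-i+1$ gives the stated formula. What your version buys is a non-inductive, single-pass argument with an explicit closed form for each cofactor; what the paper's version buys is that the inductive identity~\eqref{eq:thm1induction} is stated and available as a standalone tool. Both exploit the same structural fact---that everything below the subdiagonal is zero---so the proofs are close cousins, but yours replaces the paper's inductive lemma with a direct block-triangular factorization.
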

\begin{proof}
    We begin by proving the following equality:
    \begin{equation}
        \label{eq:thm1induction}
        \det
        \left[
        \begin{array}{cccc|c}
            \multicolumn{4}{c|}{\multirow{3}{*}{$z \mathbf{I} - \mathbf{H}_{i-1}$}} & -h_{1,n}\\
            \multicolumn{4}{c|}{} & \vdots\\
            \multicolumn{4}{c|}{} & -h_{i-1,n}\\ \hline
            0 & \cdots & 0 & -s & -h_{i,n}
        \end{array}
        \right] 
        = - \sum_{k=1}^i s^{k-1} h_{i-k+1,n} Q_{i-k}(z) 
    \end{equation}
    for $1 \le i \le n$.
    
    When $i = 1$ the left side of equation~\eqref{eq:thm1induction} reduces to $\det \begin{bmatrix} -h_{1,n} \end{bmatrix} = -h_{1,n}$, and the right side reduces to $-\sum_{k=1}^1 s^{k-1} h_{1-k+1,n} Q_{1-k}(z) = -h_{1,n}$.
    
    Assume inductively that
    \begin{equation}
        \det
        \left[
        \begin{array}{cccc|c}
            \multicolumn{4}{c|}{\multirow{3}{*}{$z \mathbf{I} - \mathbf{H}_{j-1}$}} & -h_{1,n}\\
            \multicolumn{4}{c|}{} & \vdots\\
            \multicolumn{4}{c|}{} & -h_{j-1,n}\\ \hline
            0 & \cdots & 0 & -s & -h_{j,n}
        \end{array}
        \right] 
        = - \sum_{k=1}^j s^{k-1} h_{j-k+1,n} Q_{j-k}(z)
    \end{equation}
    for $i = j-1$. Then
    \begin{align}
        \det
        \left[
        \begin{array}{cccc|c}
            \multicolumn{4}{c|}{\multirow{3}{*}{$z \mathbf{I} - \mathbf{H}_{j}$}} & -h_{1,n}\\
            \multicolumn{4}{c|}{} & \vdots\\
            \multicolumn{4}{c|}{} & -h_{j,n}\\ \hline
            0 & \cdots & 0 & -s & -h_{j+1,n}
        \end{array}
        \right]
        &= -h_{j+1,n} \det (z \mathbf{I} - \mathbf{H}_j) + s \det
        \left[
        \begin{array}{cccc|c}
            \multicolumn{4}{c|}{\multirow{3}{*}{$z \mathbf{I} - \mathbf{H}_{j-1}$}} & -h_{1,n}\\
            \multicolumn{4}{c|}{} & \vdots\\
            \multicolumn{4}{c|}{} & -h_{j-1,n}\\ \hline
            0 & \cdots & 0 & -s & -h_{j,n}
        \end{array}
        \right] \notag\\
        &= -h_{j+1,n}Q_j(z) + s \left ( - \sum_{k=1}^j s^{k-1} h_{j-k+1,n} Q_{j-k}(z) \right )\\
        &= -h_{j+1,n}Q_j(z) - \sum_{k=1}^j s^k h_{j-k+1,n} Q_{j-k}(z)\\
        &= - \sum_{k=0}^j s^k h_{j-k+1,n} Q_{j-k}(z)\\
        &= - \sum_{k=1}^{j+1} s^{k-1} h_{(j+1)-k+1,n} Q_{(j+1)-k}(z) \>.
    \end{align}
Next we prove the theorem. Performing Laplace expansion on the last row of $z \mathbf{I} - \mathbf{H}_n$ we get
    \begin{align}
        Q_n(z) &= \det
        \left[
        \begin{array}{cccc|c}
            \multicolumn{4}{c|}{\multirow{3}{*}{$z \mathbf{I} - \mathbf{H}_{n-1}$}} & -h_{1,n}\\
            \multicolumn{4}{c|}{} & \vdots\\
            \multicolumn{4}{c|}{} & -h_{n-1,n}\\ \hline
            0 & \cdots & 0 & -s & z-h_{n,n}
        \end{array}
        \right] \\
        &= (z-h_{n,n}) \det(z \mathbf{I} - \mathbf{H}_{n-1}) + s \det
        \left[
        \begin{array}{cccc|c}
            \multicolumn{4}{c|}{\multirow{3}{*}{$z \mathbf{I} - \mathbf{H}_{n-2}$}} & -h_{1,n}\\
            \multicolumn{4}{c|}{} & \vdots\\
            \multicolumn{4}{c|}{} & -h_{n-2,n}\\ \hline
            0 & \cdots & 0 & -s & -h_{n-1,n}
        \end{array}
        \right] \\
        &= z Q_{n-1}(z) - h_{n,n}Q_{n-1}(z) + s \left ( - \sum_{k=1}^{n-1} s^{k-1} h_{n-1-k+1,n} Q_{n-1-k}(z) \right)\\
        &= z Q_{n-1}(z) - h_{n,n} Q_{n-1}(z) - \sum_{k=1}^{n-1} s^k h_{n-k,n} Q_{n-1-k}(z)\\
        &= z Q_{n-1}(z) - \sum_{k=0}^{n-1} s^k h_{n-k,n} Q_{n-1-k}(z)\\
        &= zQ_{n-1}(z) - \sum_{k=1}^n s^{k-1} h_{n-k+1,n} Q_{n-k}(z) \>.
    \end{align}
\end{proof}

\begin{theorem}
    \label{thm:charPolyRec2}
    Expanding $Q_n(z)$ as
    \begin{equation}
        Q_n(z) = q_{n,n} z^n + q_{n,n-1} z^{n-1} + \cdots + q_{n,0},
    \end{equation}
    we can express the coefficients recursively by
    \begin{subequations}
    \label{eqn:thm2_eqn}
    \begin{align}
        q_{n,n} &= 1,\\
        q_{n,j} &= q_{n-1,j-1} - \sum_{k=1}^{n-j} s^{k-1} h_{n-k+1,n} q_{n-k,j} \quad\text{for}\quad 1 \le j \le n-1,\\
        q_{n,0} &= -\sum_{k=1}^n s^{k-1} h_{n-k+1,n} q_{n-k,0} \quad\text{for}\quad n>0,\quad\text{and}\\
        q_{0,0} &= 1\>.
    \end{align}
    \end{subequations}
\end{theorem}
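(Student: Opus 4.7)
The plan is straightforward: apply Theorem~\ref{thm:charPolyRec1} and equate coefficients of $z^j$ on both sides. First I would write $Q_m(z) = \sum_{j=0}^m q_{m,j} z^j$ for each $m$, extending by $q_{m,j} = 0$ for $j > m$ or $j < 0$ so that shifting summation bounds causes no trouble. Substituting this expansion into equation~\eqref{eqn:thm1} turns the identity into
\[
\sum_{j=0}^n q_{n,j} z^j \;=\; \sum_{j=1}^n q_{n-1,j-1} z^j \;-\; \sum_{k=1}^n s^{k-1} h_{n-k+1,n} \sum_{j=0}^{n-k} q_{n-k,j} z^j,
\]
and the entire claim should drop out by reading off each coefficient of $z^j$.

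First I would treat $j = n$: on the right only the shifted term $zQ_{n-1}(z)$ can contribute to $z^n$, since every $Q_{n-k}$ with $k \ge 1$ has degree at most $n-1$. So the leading coefficient equation collapses to $q_{n,n} = q_{n-1,n-1}$, and a trivial induction starting from $Q_0(z) = 1$ gives $q_{n,n} = 1$. Next, for $1 \le j \le n-1$ I would read off the coefficient of $z^j$: the shifted term contributes $q_{n-1,j-1}$, and the subtracted sum contributes $\sum_{k=1}^n s^{k-1} h_{n-k+1,n} q_{n-k,j}$; but $q_{n-k,j} = 0$ as soon as $k > n-j$, so the sum truncates at $k = n-j$, giving exactly the stated formula. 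For $j = 0$, the shifted term contributes nothing and the sum runs over all $k$ from $1$ to $n$, yielding the stated expression for $q_{n,0}$. The initial condition $q_{0,0} = 1$ is immediate from $Q_0(z) = 1$ together with the convention $\mathbf{H}_0 = [\,]$.

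There is no real obstacle in this argument: it is pure coefficient-matching driven by Theorem~\ref{thm:charPolyRec1}. The only point requiring minor care is the truncation of the inner summation at $k = n-j$, which is a consequence of the degree bound $\deg Q_{n-k} = n-k$; adopting the convention $q_{m,j} = 0$ for $j > m$ makes this automatic rather than requiring separate bookkeeping, and also explains why the $j = 0$ case must be stated separately (there is no $q_{n-1,-1}$ to inherit from the shifted term).
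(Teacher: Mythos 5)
Your proposal is correct and follows essentially the same route as the paper: apply Theorem~\ref{thm:charPolyRec1}, expand each $Q_{n-k}$ in powers of $z$, and match coefficients, with your ``extend by zero'' convention playing the role of the paper's explicit interchange of the $j$ and $k$ summations to truncate the inner sum at $k = n-j$. No gaps.
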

\begin{proof}
    By Theorem~\ref{thm:charPolyRec1}
    \begin{equation}
        Q_n(z) = zQ_{n-1}(z) - \sum_{k=1}^n s^{k-1} h_{n-k+1,n} Q_{n-k}(z) \>.
    \end{equation}
    The first term can be written
    \begin{align}
        zQ_{n-1}(z) &= z \left [z^{n-1} + q_{n-1,n-2} z^{n-2} + \cdots + q_{n-1, 0} \right ]\\
        &= z \left [z^{n-1} + \sum_{j=0}^{n-2} q_{n-1,j} z^j \right ]\\
        &= z^n + \sum_{j=0}^{n-2} q_{n-1,j} z^{j+1}\\
        &= z^n + \sum_{j=1}^{n-1} q_{n-1,j-1} z^{j}
    \end{align}
    and the second term
    \begin{align}
        s^{k-1} h_{n-k+1,n} Q_{n-k}(z) &= s^{k-1} h_{n-k+1,n} \left [q_{n-k,n-k} z^{n-k} + q_{n-k,n-k-1}z^{n-k-1} + \cdots + q_{n-k,0}\right ] \notag\\
        &= s^{k-1} h_{n-k+1,n} \sum_{j=0}^{n-k} q_{n-k,j} z^j \>.
    \end{align}
    Therefore,
    \begin{align*}
        Q_n(z) &= z^n + \sum_{j=1}^{n-1} q_{n-1,j-1} z^j - \sum_{k=1}^n s^{k-1} h_{n-k+1,n} \sum_{j=0}^{n-k} q_{n-k,j} z^j\\
        &= z^n + \sum_{j=1}^{n-1} q_{n-1,j-1} z^j - \sum_{j=0}^{n-1} \left ( \sum_{k=1}^{n-j} s^{k-1} h_{n-k+1,n} q_{n-k,j} \right ) z^j\\
        &= z^n + \sum_{j=1}^{n-1} \left ( q_{n-1,j-1} - \sum_{k=1}^{n-j} s^{k-1} h_{n-k+1,n} q_{n-k,j} \right ) z^j - \sum_{k=1}^n s^{k-1} h_{n-k+1,n} q_{n-k,0} \>.
    \end{align*}
\end{proof}

\begin{proposition}
All matrices in $\HH{\theta_k}{n}{P}$ are non-derogatory\footnote{A non-derogatory matrix is a matrix for which its characteristic polynomial and minimal polynomial coincide (up to a factor of $\pm 1$)}.
\end{proposition}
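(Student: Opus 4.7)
The plan is to exploit the fact that every matrix in $\HH{\theta_k}{n}{P}$ is an \emph{unreduced} upper Hessenberg matrix, because the subdiagonal entries $s = \exp(i\theta_k)$ are all nonzero. For such matrices, the standard strategy is to show that $e_1$ is a cyclic vector, i.e.\ that the Krylov sequence $e_1,\mathbf{H}_n e_1,\mathbf{H}_n^2 e_1,\dots,\mathbf{H}_n^{n-1}e_1$ is linearly independent. Once this is established, the minimal polynomial of $e_1$ with respect to $\mathbf{H}_n$ has degree $n$; but this polynomial divides the minimal polynomial of $\mathbf{H}_n$, which itself divides the characteristic polynomial $Q_n(z)$ of degree $n$. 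The three polynomials must therefore coincide up to a unit, giving the non-derogatory property.

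The concrete step I would carry out is a short induction on $k$ proving that
\begin{equation}
    \mathbf{H}_n^{k} e_1 = s^{k} e_{k+1} + \sum_{i=1}^{k} c_{i,k}\, e_i
\end{equation}
for $0 \le k \le n-1$, with scalars $c_{i,k}$ whose exact values do not matter. The base case $k=0$ is trivial, and the inductive step uses only that $\mathbf{H}_n e_j = s\, e_{j+1} + (\text{terms in } e_1,\dots,e_j)$, which is exactly the upper Hessenberg structure together with $h_{j+1,j}=s$. It follows that the Krylov matrix $K = [e_1\mid \mathbf{H}_n e_1\mid \cdots \mid \mathbf{H}_n^{n-1} e_1]$ is upper triangular with diagonal entries $1,s,s^2,\dots,s^{n-1}$, and is therefore nonsingular since $|s|=1$.

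From the nonsingularity of $K$ one concludes that no nonzero polynomial of degree less than $n$ annihilates $e_1$ under $\mathbf{H}_n$; equivalently, the $\mathbf{H}_n$-invariant subspace generated by $e_1$ is all of $\mathbb{C}^n$. The minimal polynomial of $\mathbf{H}_n$ thus has degree at least $n$, forcing equality (up to sign) with the characteristic polynomial $Q_n(z)$, which is the definition of non-derogatory.

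There is no serious obstacle here: the result is a classical one about unreduced Hessenberg matrices, and the Bohemian hypothesis plays no role beyond guaranteeing $s\neq 0$. The only point that deserves care is writing out the induction cleanly so that the triangular structure of the Krylov matrix is manifest; everything else is a one-line consequence of the divisibility relations among the characteristic and minimal polynomials.
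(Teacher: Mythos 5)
Your proof is correct and rests on the same underlying fact as the paper's: that $\mathbf{H}_n^k$ carries $s^k$ on its $k$-th subdiagonal with zeros below, so the first columns of $\mathbf{I},\mathbf{H}_n,\dots,\mathbf{H}_n^{n-1}$ are linearly independent. The paper phrases this as directly killing the coefficients of a putative annihilating polynomial of degree less than $n$ (starting from the $(n,1)$ entry), while you phrase it as $e_1$ being a cyclic vector via a triangular Krylov matrix; these are the same computation in different packaging.
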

\begin{proof}
Let $\mathbf{H} \in \HH{\theta_k}{n}{P}$. Because $\mathbf{H}$ is upper Hessenberg
\begin{equation}
    \mathbf{H}_{i,j}^k = 
    \begin{cases}
    f_{i,j,k} &\quad\text{for}\quad i < j+k\\
    s^k &\quad\text{for}\quad i = j + k\\
    0 &\quad\text{for}\quad i > j+k\\
    \end{cases}
\end{equation}
for $0 \le k \le n-1$ where $f_{i,j,k}$ are some functions of the entries of $\mathbf{H}$. Let
\begin{equation}
    \mathbf{A} = r(\mathbf{H}) = \sum_{k=0}^{n-1} c_{k} \mathbf{H}^k  = \mathbf{0}\> .
\end{equation}
We find $\mathbf{A}_{n,1} = s^{n-1} c_{n-1} = 0$ and therefore $c_{n-1} = 0$. Continuing recursively for $k$ from $n-2$ to 1 we find $\mathbf{A}_{k+j,j} = s^k c_k = 0$ for $1 \le j \le n-k$ and therefore $c_k = 0$ (since $c_j = 0$ for $j > k$) for $1 \le k \le n-1$. We have $\mathbf{A} = c_0 \mathbf{H}^0 = \mathbf{0}$ and hence $c_0 = 0$. Thus, no non-zero polynomial of degree less than $n$ exists that satisfies $r(\mathbf{H}) = \mathbf{0}$. Therefore, the minimal degree non-zero polynomial that satisfies $r(\mathbf{H}) = \mathbf{0}$ is the characteristic polynomial of $\mathbf{H}$.
\end{proof}

\begin{definition}
    The \textit{characteristic height} of a matrix is the height of its 
    characteristic polynomial.
\end{definition}
\begin{remark}
    The height of a polynomial is in fact a norm (the infinity norm of the vector of coefficients).
\end{remark}

\begin{proposition}
    \label{prop:negativeheight}
    For any matrix $\mathbf{A}$, $-\mathbf{A}$ has the same characteristic height as $\mathbf{A}$.
\end{proposition}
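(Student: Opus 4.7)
The plan is to express the characteristic polynomial of $-\mathbf{A}$ in terms of that of $\mathbf{A}$ and observe that the two coefficient vectors agree in absolute value entrywise, which forces their infinity norms (i.e., heights) to coincide.

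First I would write down the definition: $p_{-\mathbf{A}}(z) = \det(z\mathbf{I} - (-\mathbf{A})) = \det(z\mathbf{I} + \mathbf{A})$. Then, for an $n \times n$ matrix, I would pull a factor of $-1$ out of each of the $n$ rows (equivalently, out of the whole matrix) to get
\begin{equation*}
    \det(z\mathbf{I} + \mathbf{A}) = (-1)^n \det(-z\mathbf{I} - \mathbf{A}) = (-1)^n \det((-z)\mathbf{I} - \mathbf{A}) = (-1)^n p_{\mathbf{A}}(-z).
\end{equation*}

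Next, writing $p_{\mathbf{A}}(z) = \sum_{k=0}^{n} a_k z^k$, the identity above gives $p_{-\mathbf{A}}(z) = \sum_{k=0}^{n} (-1)^{n+k} a_k \, z^k$, so each coefficient of $p_{-\mathbf{A}}$ is $\pm$ the corresponding coefficient of $p_{\mathbf{A}}$. Taking absolute values, $|(-1)^{n+k} a_k| = |a_k|$ for every $k$, hence
\begin{equation*}
    \mathrm{height}(p_{-\mathbf{A}}) = \max_{0 \le k \le n} |(-1)^{n+k} a_k| = \max_{0 \le k \le n} |a_k| = \mathrm{height}(p_{\mathbf{A}}),
\end{equation*}
which is exactly the claim. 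There is no real obstacle here; the whole argument reduces to the determinant scaling identity $\det(c\mathbf{M}) = c^n \det(\mathbf{M})$ applied with $c = -1$, followed by the trivial observation that negating or sign-flipping individual coefficients leaves the infinity norm unchanged.
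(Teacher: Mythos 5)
Your proof is correct, and it is the natural argument: the identity $\det(z\mathbf{I}+\mathbf{A})=(-1)^n p_{\mathbf{A}}(-z)$ shows the coefficients of the two characteristic polynomials agree up to sign, so their infinity norms coincide. The paper itself states this proposition without proof (treating it as immediate), so your write-up simply supplies the omitted standard justification; nothing is missing.
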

\begin{proposition}
    \label{prop:maxheight}
    The maximal characteristic height of $\mathbf{H}_n \in \mathcal{H}_{\{0,\pi\}}^{n \times n}(\{-1, 0, {+1}\})$ occurs when
    $s^{k-1} h_{i,i+k-1} = -1$ for $1 \le i \le n-k+1$ and $1 \le k \le n$.
\end{proposition}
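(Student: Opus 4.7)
The plan is to work directly with the coefficient recursion of Theorem~\ref{thm:charPolyRec2}. Writing $a_k^{(n)} := s^{k-1} h_{n-k+1,n}$, the entries $h_{i,j} \in \{-1,0,+1\}$ together with $s \in \{-1,+1\}$ force $a_k^{(n)} \in \{-1,0,+1\}$. The recursion for $1 \le j \le n-1$ becomes
\begin{equation*}
    q_{n,j} = q_{n-1,j-1} - \sum_{k=1}^{n-j} a_k^{(n)} q_{n-k,j},
\end{equation*}
with the analogous recursion for $q_{n,0}$ and boundary conditions $q_{n,n}=1$, $q_{0,0}=1$. The condition in the proposition, after the substitution $j = i+k-1$, is equivalent to $a_k^{(n)} = -1$ for every $1 \le k \le n$ and every $n \ge 1$, so I would denote the resulting coefficients by $\tilde q_{n,j}$ and treat the proposition as the claim that $|q_{n,j}| \le \tilde q_{n,j}$ for every admissible choice of upper-triangular entries and every $j$.

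The first step is to establish that $\tilde q_{n,j} \ge 0$ for all $n,j$. With $a_k^{(n)} \equiv -1$ the recursion collapses to
\begin{equation*}
    \tilde q_{n,j} = \tilde q_{n-1,j-1} + \sum_{k=1}^{n-j} \tilde q_{n-k,j},
\end{equation*}
together with $\tilde q_{n,0} = \sum_{k=1}^n \tilde q_{n-k,0}$, so nonnegativity is preserved from the base case $\tilde q_{0,0}=1$ by strong induction on $n$ (at each step the index $j-1$ is nonnegative because the $j=0$ case is handled by the separate recursion). This step is essentially a routine bookkeeping argument.

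The second step is the comparison $|q_{n,j}| \le \tilde q_{n,j}$, also by strong induction on $n$. Applying the triangle inequality to the general recursion and using $|a_k^{(n)}| \le 1$ gives
\begin{equation*}
    |q_{n,j}| \le |q_{n-1,j-1}| + \sum_{k=1}^{n-j} |q_{n-k,j}| \le \tilde q_{n-1,j-1} + \sum_{k=1}^{n-j} \tilde q_{n-k,j} = \tilde q_{n,j},
\end{equation*}
where the second inequality is the inductive hypothesis together with the nonnegativity established in step one (so no absolute values obstruct the comparison on the right). The same argument handles $q_{n,0}$.

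Taking the infinity norm over $j$ yields $\mathrm{height}(Q_n) = \max_j |q_{n,j}| \le \max_j \tilde q_{n,j} = \mathrm{height}(\tilde Q_n)$, with equality achieved by the choice $s^{k-1} h_{i,i+k-1} = -1$, which is precisely the statement of the proposition. The only subtlety worth flagging is that the triangle-inequality step is tight exactly because the positivity of $\tilde q_{n,j}$ makes all the terms in the bounding sum contribute with the same sign; if any $\tilde q_{n-k,j}$ were negative, the bound would overshoot and a different extremal configuration might emerge. Hence the main conceptual content of the proof is the coupled induction: nonnegativity of $\tilde q_{n,j}$ and the domination $|q_{n,j}| \le \tilde q_{n,j}$ must be carried through $n$ simultaneously.
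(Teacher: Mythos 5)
Your proof is correct and follows essentially the same route as the paper: specialize the coefficient recursion of Theorem~\ref{thm:charPolyRec2} to the choice $s^{k-1}h_{i,i+k-1}=-1$ and observe that all resulting coefficients are nonnegative. Your write-up is in fact more complete than the paper's, which stops after establishing positivity of the extremal coefficients and leaves implicit the domination step $|q_{n,j}| \le \tilde q_{n,j}$ that you carry out explicitly via the coupled induction and the triangle inequality.
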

\begin{proof}
    Since $s \in \{-1, {+1}\}$ and $h_{i,j} \in \{-1, 0, {+1}\}$, $s^{k-1} h_{i,i+k-1} \in \{-1, 0, {+1}\}$ {and hence $\max |s^{k-1} h_{i,i+k-1}| = 1$.} Let $s^{k-1} h_{i,i+k-1} = -1$. By Theorem~\ref{thm:charPolyRec2}
    \begin{align}
        q_{n,0} &= -\sum_{k=1}^n s^{k-1} h_{n-k+1,n} q_{n-k,0}\\
                &= \sum_{k=1}^n q_{n-k,0} \label{eq:prop1_3_MaxHeight}
    \end{align}
    and
    \begin{align}
        q_{n,j} &= q_{n-1,j-1} - \sum_{k=1}^{n-j} s^{k-1} h_{n-k+1,n} q_{n-k,j}\\
                &= q_{n-1,j-1} + \sum_{k=1}^{n-j} q_{n-k,j} \>. \label{eq:prop1_2_MaxHeight}
    \end{align}
    Since $q_{0,0} = 1$, and equations~\eqref{eq:prop1_3_MaxHeight} and \eqref{eq:prop1_2_MaxHeight} are independent of $s$ and $h_{i,j}$, all $q_{n,j}$ must be positive and the maximum characteristic height is attained.
\end{proof}
\begin{remark}
    When $s = 1$ ($\theta = 0$) and $h_{i,j} = -1$ for all $1 \le i \le j \le n$, $\mathbf{H}_n$ attains maximal characteristic height. By Proposition~\ref{prop:negativeheight}, $s = -1$ ($\theta = \pi$) and $h_{i,j} = 1$ will also attain maximal characteristic height. Both of these cases correspond to upper Hessenberg matrices with a Toeplitz structure as we explore in further detail in the paper~\cite{chan2018BUHT}.
\end{remark}

\begin{definition}
    $P$ is invariant under multiplication by a fixed unit $e^{i\theta}$ if $e^{i\theta}P = P$; that is, each entry of $P$, say $p$, is such that $e^{i\theta}p$ is also in $P$. For instance, $\{-1, 0, {+1}\}$ is invariant under multiplication by $-1$. Note that invariance with respect to $e^{i\theta}$ implies invariance with respect to $e^{-i\theta}$.
\end{definition}

\begin{theorem}
    Suppose $\mathbf{H}_n \in \HH{\theta_k}{n}{P}$ and $P$ is invariant under multiplication by each $e^{i\theta_{k}}$ and by $-e^{i\theta_{k}}$. Then $\mathbf{H}_n$ is similar to a matrix in $\HH{\pi}{n}{P}$, and similar to a matrix in $\HH{0}{n}{P}$.
\end{theorem}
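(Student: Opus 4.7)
The plan is to exhibit an explicit diagonal similarity $D^{-1}\mathbf{H}_n D$ that normalizes the subdiagonal to $-1$ (for the $\HH{\pi}{n}{P}$ claim) or to $+1$ (for the $\HH{0}{n}{P}$ claim), while keeping every upper-triangular entry inside $P$. Let $s_i := \mathbf{H}_{i+1,i}$, so each $s_i$ is one of the allowed roots of unity $e^{i\theta_k}$. Set $D = \mathrm{diag}(d_1,\dots,d_n)$ with $d_1 = 1$ and define $d_{i+1}$ recursively by $d_{i+1} = -d_i s_i$ in the first case, or $d_{i+1} = d_i s_i$ in the second. Since the $(i+1,i)$ entry of $D^{-1}\mathbf{H}_n D$ is $(d_i/d_{i+1}) s_i$, this choice forces every subdiagonal entry to equal $-1$ (resp.\ $+1$) by construction. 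Because every $d_i$ has modulus one, $D$ is invertible.

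Next I would check that each upper-triangular entry $(d_i/d_j)h_{i,j}$ with $i\le j$ lies in $P$. Unrolling the recursion, $d_i/d_j$ is a product of factors of the form $\pm s_k^{\pm 1}$ taken over $i \le k < j$; in other words, it is a product of units of the form $\pm e^{\pm i\theta_k}$. The hypothesis gives invariance of $P$ under multiplication by each $e^{i\theta_k}$ and by each $-e^{i\theta_k}$, so one immediately obtains invariance under $-1$ (taking the quotient of the two). Moreover, since each map $p\mapsto e^{i\theta_k}p$ is a bijection on the \emph{finite} set $P$, its inverse $p\mapsto e^{-i\theta_k}p$ also preserves $P$. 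Therefore $P$ is invariant under the entire group generated by $\{\pm e^{\pm i\theta_k}\}$, which contains every possible value of $d_i/d_j$.

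Combining the two observations, $D^{-1}\mathbf{H}_n D$ is upper Hessenberg, has subdiagonal uniformly $-1$ (resp.\ $+1$), and has upper-triangular entries in $P$. This puts $D^{-1}\mathbf{H}_n D$ in $\HH{\pi}{n}{P}$ (resp.\ $\HH{0}{n}{P}$), proving the theorem.

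The main obstacle is the careful bookkeeping around the group-theoretic invariance: one must ensure that both $e^{i\theta_k}$ and $e^{-i\theta_k}$ act on $P$, and that the combined presence of the $-1$ factor in the first recursion does not escape the invariance hypothesis. Both points are settled by the finiteness of $P$ and by the explicit inclusion of $-e^{i\theta_k}$ in the stated invariance, so no additional assumption is required.
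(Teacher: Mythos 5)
Your proof is correct and takes essentially the same route as the paper's: the paper argues by induction, conjugating at each step by $\mathrm{diag}(1, -e^{i\theta_k}, \mathbf{I}_{n-2})$ to normalize one subdiagonal entry at a time, and your diagonal matrix $D$ with $d_1=1$, $d_{i+1} = \mp d_i s_i$ is precisely the composition of those elementary conjugations written out explicitly, with the same appeal to invariance of the finite set $P$ under the group generated by $\pm e^{\pm i\theta_k}$. The only quibble is a harmless index slip---the $(i,j)$ entry of $D^{-1}\mathbf{H}_n D$ is $(d_j/d_i)h_{i,j}$ rather than $(d_i/d_j)h_{i,j}$---which your group-invariance argument absorbs in any case.
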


\begin{proof}
    We use induction. The case $n=1$ is vacuously upper Hessenberg, though it is
    \begin{equation*}
        \begin{bmatrix}
            e^{i\theta_{k}}
        \end{bmatrix}
        \begin{bmatrix}
            h_{11}
        \end{bmatrix}
        \begin{bmatrix}
            e^{-i\theta_{k}}
        \end{bmatrix}
        =
        \begin{bmatrix}
            h_{11}
        \end{bmatrix}
        \in \HH{\theta_k}{1}{P} \>.
    \end{equation*}
    For $n > 1$, partition the matrix as
    \begin{equation*}
        \left[
        \begin{array}{cccc}
            h_{11} & h_{12} \cdots & h_{1n} \\
            s & \multicolumn{3}{c}{\multirow{3}{*}{$\mathbf{H}_{n-1}$}} \\
            & & & \\
            & & &
        \end{array}
        \right]
    \end{equation*}
    where $s = e^{i\theta_{k}}$ for some $\theta_{k}$. Then conjugate by
    \begin{multline*}
        \begin{bmatrix}
            1 & & \\
            & -e^{i\theta_{k}} & \\
            & & \mathbf{I}_{n-2}
        \end{bmatrix}
        \left[
        \begin{array}{cccc}
            h_{11} & h_{12} \cdots & h_{1n} \\
            s & \multicolumn{3}{c}{\multirow{3}{*}{$\mathbf{H}_{n-1}$}} \\
            & & & \\
            & & &
        \end{array}
        \right]
        \begin{bmatrix}
            1 & & \\
            & -e^{i\theta_{k}} & \\
            & & \mathbf{I}_{n-2}
        \end{bmatrix} ^{-1} \\
        =
        \left[
        \begin{array}{ccc}
            h_{11} & -e^{-i\theta_{k}}h_{12} & \cdots \\
            -1 & \multicolumn{2}{c}{\multirow{2}{*}{$\tilde{\mathbf{H}}_{n-1}$}} \\
            & &
        \end{array}
        \right] \>.
    \end{multline*}
    Clearly $\tilde{\mathbf{H}}_{n-1}$ is in $\HH{\theta_k}{n-1}{P}$. By induction the proof is complete.
\end{proof}

\begin{remark}
    For clarity, consider the case $n = 2$:
    \begin{equation}
         \mathbf{H} =
        \begin{bmatrix}
            a & b \\
            s & c
        \end{bmatrix} \>,
    \end{equation}
    where $a, b, c \in P$ and $s = e^{i\theta_{k}}$. Then, the following similarity transforms reduce the problem to one in \HH{0}{2}{P} and one in \HH{\pi}{2}{P}.
    \begin{align}
        \begin{bmatrix}
            1 & 0 \\
            0 & e^{-i\theta_{k}}
        \end{bmatrix}
        \mathbf{H}
        \begin{bmatrix}
            1 & 0 \\
            0 & e^{i\theta_{k}}
        \end{bmatrix}
        & =
        \begin{bmatrix}
            a & be^{i\theta_{k}} \\
            1 & c
        \end{bmatrix} \\
        \begin{bmatrix}
            1 & 0 \\
            0 & -e^{-i\theta_{k}}
        \end{bmatrix}
        \mathbf{H}
        \begin{bmatrix}
            1 & 0 \\
            0 & -e^{i\theta_{k}}
        \end{bmatrix}
        &= 
        \begin{bmatrix}
            a & -be^{i\theta_{k}} \\
            -1 & c
        \end{bmatrix} \>.
    \end{align}
\end{remark}


\section{Upper Hessenberg Toeplitz Matrices}
Proposition~\ref{prop:maxheight} gives  matrices in 
\HH{0,\pi}{n}{\{-1, 0, {+1}\}}
with maximal characteristic height\footnote{We did not report the numbers of such matrices and polynomials that we found in our ``results'' section.}. We noticed that they are Toeplitz matrices. This motivated our interest in upper Hessenberg Toeplitz matrices. We summarize some of the results of~\cite{chan2018BUHT} here.

\begin{definition}
    Consider matrices $\mathbf{H}_n$ where
    $h_{i,i+k-1} = t_k$ for $1 \le i \le n-k+1$, $1 \le k \le n$ and $s = 1$. We denote these matrices by $\mathbf{M}_n$ and they have a Toeplitz structure.
\end{definition}

\begin{remark}
    \label{thm:maxheight_UHT}
    The characteristic height of $\mathbf{M}_n$ is maximal when $t_k = -1$ for $1 \le k \le n$.
\end{remark}

\begin{proposition}
    Let $F \subset \mathbb{R}$ be a closed and bounded set with $a = \min{F}$, $b = \max{F}$ and $\#F \ge 2$. Let $\mathbf{M}_n$ be upper Hessenberg Toeplitz with $t_k \in F$. If $|a| \ge |b|$, $\mathbf{M}_n$ attains maximal characteristic height for $t_k = a$ for all $1 \le k \le n$. If $|b| \ge |a|$, $\mathbf{M}_n$ attains maximal characteristic height for $t_k = a$ for $k$ even, and $t_k = b$ for $k$ odd.
\end{proposition}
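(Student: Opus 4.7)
The plan is to fix the Toeplitz structure with $s=1$ and $h_{i,i+k-1}=t_k$ and invoke Theorem~\ref{thm:charPolyRec2}, which specializes to the linear recurrence
\[
q_{n,j}(t) = q_{n-1,j-1}(t) - \sum_{k=1}^{n-j} t_k\, q_{n-k,j}(t),
\]
and then, for the claimed $t^*$, to establish by induction on $n$ both (a) a sign statement $(-1)^{n-j}q_{n,j}(t^*)\ge 0$ and (b) the coefficient bound $|q_{n,j}(t)|\le (-1)^{n-j}q_{n,j}(t^*)$ for every $t\in F^n$ and every $0\le j\le n$. Taking the maximum over $j$ on both sides of (b) shows that the height of $Q_n(\,\cdot\,;t)$ is at most that of $Q_n(\,\cdot\,;t^*)$.

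Case 1 ($|a|\ge|b|$) with $t^*_k=a$ is the easy case and mirrors the proof of Proposition~\ref{prop:maxheight}. Since $-t^*_k=|a|\ge 0$, the recurrence at $t^*$ becomes $q_{n,j}(t^*) = q_{n-1,j-1}(t^*) + |a|\sum_k q_{n-k,j}(t^*)$ with only non-negative terms, so induction yields $q_{n,j}(t^*)\ge 0$. The same induction combined with $|t_k|\le|a|$ and the triangle inequality gives $|q_{n,j}(t)|\le q_{n-1,j-1}(t^*)+|a|\sum_k q_{n-k,j}(t^*) = q_{n,j}(t^*)$.

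For case 2 ($|b|\ge|a|$) with $t^*_k=b$ for $k$ odd and $t^*_k=a$ for $k$ even, I would first pass to the change of variables $u_k=(-1)^{k+1}t_k$ and the signed quantity $Q^*_{n,j}(u):=(-1)^{n-j}q_{n,j}(t)$. Multiplying the recurrence by $(-1)^{n-j}$, using $(-1)^{n-j}(-1)^{n-k-j}=(-1)^k$, and noting that $(-1)^k t_k=-u_k$, one obtains
\[
Q^*_{n,j}(u) = Q^*_{n-1,j-1}(u) + \sum_{k=1}^{n-j} u_k\, Q^*_{n-k,j}(u),
\]
a recurrence with only plus signs. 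By induction $Q^*_{n,j}$ is a polynomial in $u_1,\ldots,u_{n-j}$ with non-negative coefficients, so at $u^*_k=b$ (odd $k$) and $u^*_k=|a|$ (even $k$) — both non-negative — one has $Q^*_{n,j}(u^*)\ge 0$, giving (a).

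The hard part is (b) in case~2: showing $|Q^*_{n,j}(u)|\le Q^*_{n,j}(u^*)$ for every feasible $u$. The naive triangle bound $|Q^*_{n,j}(u)|\le Q^*_{n,j}(|u_1|,|u_2|,\ldots)$ is not tight, because for even $k$ the feasible range $u_k\in(-1)^{k+1}F=[-b,|a|]$ permits $|u_k|=b>|a|=u^*_k$. I would close this gap by exploiting the parity constraint built into the combinatorial expansion of $Q^*_{n,j}$: every monomial $\prod_k u_k^{e_k}$ satisfies $\sum_k k\,e_k=n-j$, which pins $\sum_{k\text{ odd}}e_k$ to the fixed parity $(n-j)\bmod 2$. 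Consequently any $u$ with $|u_k|>|a|$ on some even coordinate must have $u_k<0$ there, and the resulting monomial signs in $Q^*_{n,j}(u)$ alternate in a way that forces systematic cancellation. Concretely, I would fix the other coordinates at their $u^*$ values, write $Q^*_{n,j}$ as a polynomial $\sum_e A_e(u^*_{-k})u_k^e$ with non-negative coefficients, and check the one-variable inequality $\bigl|\sum_e A_e(-b)^e\bigr|\le\sum_e A_e|a|^e$; the explicit formula $A_m=\binom{j+m}{m}[w^{n-j-km}](1-U^{*}_{-k}(w))^{-(j+1+m)}$ for the coefficients makes this a statement purely about the positive Taylor coefficients of $(1-U^{*}_{-k}(w))^{-(j+1)}$, which one verifies using $b\ge|a|$. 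Iterating this exchange over the even coordinates reduces any feasible $u$ to one lying in the positive orthant, and monotonicity of $Q^*_{n,j}$ on the positive orthant (non-negative coefficients) then finishes the proof. This exchange step is the main technical obstacle; it is worked out in full in the companion paper~\cite{chan2018BUHT}.
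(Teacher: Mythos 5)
First, a point of comparison: the paper itself gives no proof of this proposition --- the surrounding text says these are results summarized from the companion paper~\cite{chan2018BUHT} --- so there is no in-paper argument to measure you against. On its own terms, your Case 1 is correct and is essentially the same non-negativity-plus-triangle-inequality induction as Proposition~\ref{prop:maxheight}, and your substitution $u_k=(-1)^{k+1}t_k$, $Q^*_{n,j}=(-1)^{n-j}q_{n,j}$ in Case 2 is a sound and useful move: it does convert the coefficient recurrence into one with only plus signs, and your weight identity $\sum_k k\,e_k=n-j$ for the monomials is right.

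The genuine gap is exactly where you flag it: the ``exchange'' inequality for the even coordinates is asserted, not proved (the formula you quote for $A_m$ is neither derived nor stated in defined notation), and your last sentence defers ``the main technical obstacle'' to~\cite{chan2018BUHT}, which is not available to you. Worse, the intermediate statement (b) you are aiming at --- coefficient-wise domination $|q_{n,j}(t)|\le(-1)^{n-j}q_{n,j}(t^*)$ for \emph{every} $j$ --- is false at the stated level of generality, so no exchange lemma can rescue it. Take $F=\{-\tfrac{1}{10},\,0,\,\tfrac{1}{2}\}$, so $a=-\tfrac{1}{10}$, $b=\tfrac{1}{2}$, $|b|\ge|a|$ and $t^*=(b,a)$ for $n=2$. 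Here $q_{2,0}=t_1^2-t_2$, so $q_{2,0}(t^*)=\tfrac14+\tfrac1{10}=0.35$, while $|q_{2,0}(0,\tfrac12)|=\tfrac12>0.35$. The proposition survives in this example only because the monic leading coefficient forces the height of both matrices to equal $1$: the \emph{maximum over $j$} can be attained at $t^*$ even when individual coefficients are not maximized there. Any complete proof must therefore either restrict the population (e.g.\ to sets with $\max(|a|,|b|)\ge 1$, where your one-variable exchange has a chance) or argue about the height directly rather than coefficient by coefficient. A smaller issue in the same case: you set $u^*_k=|a|$ for even $k$, which presumes $a\le 0$; if $0<a<b$ then $u^*_k=-a<0$ and your claim (a) that $Q^*_{n,j}(u^*)\ge 0$ no longer follows from non-negativity of the coefficients (and can fail, e.g.\ $F=\{\tfrac12,\tfrac35\}$, $n=2$ gives $q_{2,0}(t^*)=b^2-a<0$).
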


\begin{proposition}
    The maximum characteristic height grows at least exponentially in $n$.
\end{proposition}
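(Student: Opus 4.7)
The plan is to exhibit a single coefficient of the characteristic polynomial that grows exponentially in $n$ for the extremal matrices identified by Proposition~\ref{prop:maxheight}. First I would invoke Proposition~\ref{prop:maxheight} to reduce to the case $s^{k-1} h_{i,i+k-1} = -1$ for every pair $(i,k)$, which realizes the maximum characteristic height. Under this choice the coefficient recurrences of Theorem~\ref{thm:charPolyRec2} collapse (the signs all cancel) to
\begin{equation*}
q_{n,0} \;=\; \sum_{k=1}^{n} q_{n-k,0}, \qquad q_{n,j} \;=\; q_{n-1,j-1} + \sum_{k=1}^{n-j} q_{n-k,j},
\end{equation*}
with seed $q_{0,0}=1$, and all $q_{n,j}$ are nonnegative integers.

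Next I would focus just on the constant term $q_{n,0}$. Subtracting the recurrence for $q_{n-1,0}$ from that for $q_{n,0}$ immediately gives $q_{n,0} - q_{n-1,0} = q_{n-1,0}$ for $n \ge 2$, i.e.\ $q_{n,0} = 2\, q_{n-1,0}$. Together with $q_{1,0} = q_{0,0} = 1$, induction yields $q_{n,0} = 2^{n-1}$ for all $n \ge 1$.

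Finally, since the characteristic height is the infinity norm of the coefficient vector and $|q_{n,0}| = 2^{n-1}$, the maximum characteristic height over $\mathcal{H}_{\{0,\pi\}}^{n\times n}(\{-1,0,+1\})$ is bounded below by $2^{n-1}$, which is exponential in $n$. There is essentially no obstacle here beyond noticing the telescoping $q_{n,0} = 2 q_{n-1,0}$; the only subtlety is ensuring one uses the constant term (rather than some middle coefficient) to make the doubling argument trivial. A sharper exponential base could plausibly be obtained by examining other $q_{n,j}$, but is not needed for the stated lower bound.
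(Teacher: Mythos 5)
Your argument is correct: with $s=1$ and $h_{i,j}=-1$ throughout, the constant-term recurrence from Theorem~\ref{thm:charPolyRec2} becomes $q_{n,0}=\sum_{k=1}^{n}q_{n-k,0}$, which telescopes to $q_{n,0}=2q_{n-1,0}$ and hence $q_{n,0}=2^{n-1}$ (one can check $Q_1=z+1$, $Q_2=z^2+2z+2$, $Q_3=z^3+3z^2+5z+4$), so the maximal characteristic height is at least $2^{n-1}$. This is exactly the kind of argument the paper's own machinery (Proposition~\ref{prop:maxheight} together with Theorem~\ref{thm:charPolyRec2}) is set up to deliver, and the resulting bound is consistent with the conjectured asymptotic rate $(1+\varphi)^n$; note only that citing Proposition~\ref{prop:maxheight} is a convenience rather than a necessity, since exhibiting this single matrix already bounds the maximum from below.
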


\begin{conjecture}
    The maximum characteristic height approaches $C (1 + \varphi)^n$ as $n \to \infty$ for some constant $C$ where $\varphi$ is the golden ratio.
\end{conjecture}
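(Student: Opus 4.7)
The plan is to specialise the recurrence of Theorem~\ref{thm:charPolyRec1} to the maximum-height configuration identified in Proposition~\ref{prop:maxheight}, where $-s^{k-1}h_{n-k+1,n}=1$ for every $k$, giving $Q_n(z)=zQ_{n-1}(z)+\sum_{k=1}^{n}Q_{n-k}(z)$. Subtracting the analogous identity for $Q_{n-1}$ telescopes this to the three-term recurrence $Q_n(z)=(z+2)Q_{n-1}(z)-zQ_{n-2}(z)$ with $Q_0(z)=1$ and $Q_1(z)=z+1$, equivalently the bivariate generating function
\begin{equation*}
G(z,t)=\sum_{n\ge 0}Q_n(z)\,t^n=\frac{1-t}{1-(2+z)\,t+z\,t^2}.
\end{equation*}
Factoring the denominator as $(1-\alpha(z)t)(1-\beta(z)t)$ with $\alpha(z),\beta(z)=\bigl((2+z)\pm\sqrt{z^2+4}\bigr)/2$ yields the Binet-type closed form $Q_n(z)=A(z)\alpha(z)^n+B(z)\beta(z)^n$, where $A(z)=(\alpha(z)-1)/(\alpha(z)-\beta(z))$ and $B(z)=1-A(z)$.

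At $z=1$ the roots simplify to $\alpha(1)=(3+\sqrt{5})/2=\varphi^{2}=1+\varphi$ and $\beta(1)=1/\varphi^{2}$, and $Q_n(1)=F_{2n+1}\sim(\varphi/\sqrt{5})(1+\varphi)^n$. Because Proposition~\ref{prop:maxheight} guarantees every $q_{n,j}$ is nonnegative, summing them supplies the upper bound
\begin{equation*}
M_n:=\max_j q_{n,j}\;\le\;Q_n(1)\;\sim\;\frac{\varphi}{\sqrt{5}}(1+\varphi)^n
\end{equation*}
at the conjectured exponential rate. The matching lower bound I would obtain by bivariate saddle-point analysis of $q_{n,j}=[z^j]\bigl(A(z)\alpha(z)^n\bigr)+[z^j]\bigl(B(z)\beta(z)^n\bigr)$, writing the dominant first term as $\frac{1}{2\pi i}\oint A(z)\alpha(z)^n z^{-j-1}\,dz$ and optimising the exponent $n\log\alpha(z)-(j+1)\log z$ jointly over the ratio $j/n\in(0,1)$ and the radius of the contour. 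The critical saddle sits near $z^{\ast}=1$, where $\alpha(z^{\ast})=1+\varphi$, so the leading asymptotic recovers the conjectured base.

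The main obstacle is that a textbook saddle-point evaluation contributes a Gaussian factor $n^{-1/2}$, so the machinery rigorously delivers $M_n\sim C(1+\varphi)^n/\sqrt{n}$ rather than the bare $C(1+\varphi)^n$ of the conjecture; the coarser statement $M_n^{1/n}\to 1+\varphi$ does follow from the upper bound together with the trivial $M_n\ge Q_n(1)/(n+1)$. The computed heights $1,2,5,12,28,66,168,416,1008,2528,6448$ for $1\le n\le 11$ are in fact more consistent with the refined form: the rescaled quantity $M_n\sqrt{n}/(1+\varphi)^n$ hovers near $0.5$, whereas $M_n/(1+\varphi)^n$ still drifts downward, suggesting that the conjecture may need an additional $n^{-1/2}$ factor to be sharp. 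Secondary technical difficulties are controlling the integrand uniformly across the branch points of $\sqrt{z^2+4}$ at $z=\pm 2i$ and pinning down the position $j_n$ of the maximising coefficient, which the data shows drifting slowly through $j_n\in\{1,1,1,2,2,2,2,3,\ldots\}$; a cleaner alternative, if accessible via the companion paper~\cite{chan2018BUHT}, would be to identify an explicit diagonal $q_{n,j(n)}$ whose generating function in $n$ alone reduces to a Fibonacci-type sequence, bypassing the full bivariate saddle-point argument.
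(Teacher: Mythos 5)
The statement you are addressing is labelled a \emph{conjecture} in the paper: the authors offer no proof, only the numerical evidence of the figure showing $\log\tau_{n+1}-\log\tau_n\to\log(1+\varphi)$. So there is no argument of theirs to compare against; the question is whether your proposal would settle it. It does not, but the rigorous portion is correct and already goes beyond the paper. Your reduction via Proposition~\ref{prop:maxheight} and Theorem~\ref{thm:charPolyRec1} to $Q_n(z)=zQ_{n-1}(z)+\sum_{k=1}^{n}Q_{n-k}(z)$, the telescoped recurrence $Q_n=(z+2)Q_{n-1}-zQ_{n-2}$ with $Q_0=1$, $Q_1=z+1$, the generating function, and the evaluation $Q_n(1)=F_{2n+1}$ all check out (I verified $Q_2=z^2+2z+2$, $Q_3=z^3+3z^2+5z+4$, $Q_4=z^4+4z^3+9z^2+12z+8$, with heights $2,5,12$ as you list). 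Since all coefficients are positive, the sandwich $Q_n(1)/(n+1)\le M_n\le Q_n(1)$ gives $M_n^{1/n}\to 1+\varphi$ rigorously; this pins down the exact exponential base and sharpens the paper's preceding proposition, which asserts only exponential growth.

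The gap is precisely the step you defer: the saddle-point analysis that would upgrade $M_n^{1/n}\to 1+\varphi$ to an asymptotic for $M_n$ itself, and it is not a formality. A local-limit-theorem computation for the coefficient array of $Q_n$ (positive coefficients, total mass $F_{2n+1}\sim(\varphi/\sqrt5)(1+\varphi)^n$, concentrated in a window of width $\Theta(\sqrt n)$ around $j\approx\mu n$ with $\mu=\alpha'(1)/\alpha(1)\approx 0.2764$, consistent with your observed drift of $j_n$) yields $M_n\sim C(1+\varphi)^n n^{-1/2}$ rather than $C(1+\varphi)^n$, and your numerics ($M_n\sqrt n/(1+\varphi)^n$ stabilizing near $0.51$ while $M_n/(1+\varphi)^n$ keeps drifting downward) support this. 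So completing your program would most likely \emph{refute} the conjecture as literally stated rather than prove it; note that the paper's figure cannot distinguish the two forms, since any subexponential correction cancels out of $\log\tau_{n+1}-\log\tau_n$. To finish either version you would still need to (i) control $|Q_n(e^{i\theta})|/Q_n(1)$ uniformly away from $\theta=0$ (the roots of $Q_n$ are not real, so the local limit theorem is not automatic), handle the branch points of $\sqrt{z^2+4}$ at $z=\pm 2i$, and discard the $\beta(z)^n$ term; and (ii) locate $\arg\max_j q_{n,j}$ to within $o(\sqrt n)$ of $\mu n$. As it stands you have proved the correct exponential rate, identified a likely needed $n^{-1/2}$ correction to the conjecture, but proved neither the stated form nor the corrected one.
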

\begin{remark}
    This limit is illustrated in Figure~\ref{fig:maxheightratiolog}, motivating this conjecture.
\end{remark}
\begin{figure}[h]
    \centering
    \includegraphics[width=\textwidth]{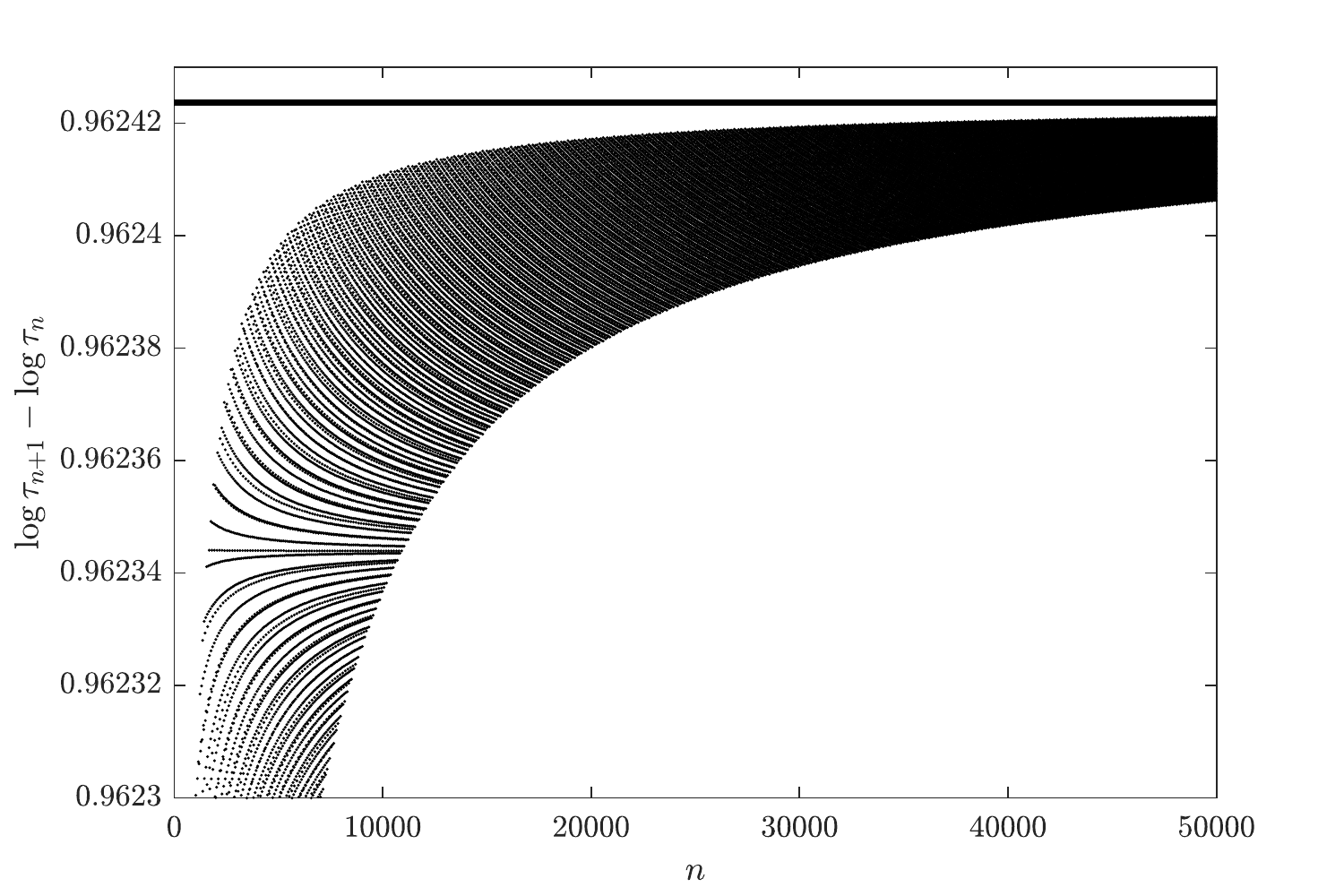}
    \caption{The points are $\log{\tau_{n+1}} - \log{\tau_n}$ for $n$ from 0 to 50000 where $\tau_n$ is the maximal characteristic height of $\mathbf{M}_n$ (i.e. when $t_k = 1$, for example). The solid line is $\log(1 + \varphi)$ where $\varphi$ is the golden ratio.}
    \label{fig:maxheightratiolog}
\end{figure}

\section{Zero Diagonal Upper Hessenberg Matrices}


\begin{theorem}
\label{thm:zero_diag_UH}
Let $\mathbf{A}_n \in \ZH{0}{n}{P}$ for $P = \{0, w_1, \ldots, w_m\}$ for some fixed positive integer $m$ and each $|w_j| = 1$. If $\mathbf{A}_{n}$ is normal, i.e.~$\mathbf{A}_{n}^{*}\mathbf{A}_{n} = \mathbf{A}_{n}\mathbf{A}_{n}^{*}$, then for $n \geq 3$, $\mathbf{A}_{n}$ is symmetric, $w_{j}$-skew symmetric for some fixed $1 \leq j \leq m$ or $w_{j}$-skew circulant. These $2m$ matrices ($m$ symmetric/$w_j$-skew symmetric, and $m$ $w_j$-skew circulant matrices) are the only normal matrices in $\ZH{0}{n}{P}$. (For $n=1$, this is only $\left[ 0 \right]$; for $n=2$, the symmetric and circulant cases coalesce, so that there are only $m$ such matrices.)
\end{theorem}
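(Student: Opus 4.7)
The plan is to extract sparsity from $\mathbf{A}_n^*\mathbf{A}_n=\mathbf{A}_n\mathbf{A}_n^*$ in stages: first using diagonal entries, then a handful of off-diagonal entries to force a dichotomy, and finally an induction that pins down the entire matrix. Since every element of $P\setminus\{0\}$ has unit modulus, so does every nonzero upper-triangular entry. Comparing the diagonals of $\mathbf{A}_n^*\mathbf{A}_n$ and $\mathbf{A}_n\mathbf{A}_n^*$ gives $\|c_i\|=\|r_i\|$ for every $i$, where $c_i$ and $r_i$ denote the $i$-th column and row; at $i=1$ and $i=n$ the column (resp.\ row) contains only a subdiagonal $1$, so the matched norm forces row~$1$ and column~$n$ each to carry exactly one nonzero entry above the diagonal, say $A_{1,k_1}$ and $A_{k_n,n}$.

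Next I would force $k_1\in\{2,n\}$ by combining two identities. The diagonal identity at $(2,2)$ reads $|A_{1,2}|^2=\sum_{k\ge 3}|A_{2,k}|^2$, so if $k_1\ne 2$ then $A_{1,2}=0$ and every $A_{2,k}$ with $k\ge 3$ vanishes. The off-diagonal normality identity at $(1,k_1+1)$ reduces, once terms with $A_{k,1}$ off the subdiagonal or $A_{1,k}$ with $k\ne k_1$ are discarded, to $A_{2,k_1+1}=A_{1,k_1}\ne 0$. For $3\le k_1\le n-1$ this contradicts the previous conclusion since $k_1+1\ge 3$; hence $k_1\in\{2,n\}$. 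A mirror-image computation using the $(n-1,n-1)$ diagonal identity together with the $(k_n-1,n)$ off-diagonal identity, which forces $A_{k_n-1,n-1}=A_{k_n,n}\ne 0$, gives $k_n\in\{1,n-1\}$. The pairings $(k_1,k_n)=(2,1)$ and $(n,n-1)$ each would place a second nonzero in row~$1$ or column~$n$ and are excluded, leaving only the \emph{circulant configuration} $(n,1)$ (one shared entry $A_{1,n}=w'$) and the \emph{tridiagonal configuration} $(2,n-1)$ (with $A_{1,2}=w$).

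The remainder is an induction along rows, using the general off-diagonal normality identity
\[
A_{i+1,j}+\sum_{k=1}^{i-1}\overline{A_{k,i}}\,A_{k,j}=A_{i,j-1}+\sum_{k\ge j+1}A_{i,k}\,\overline{A_{j,k}},\qquad i<j.
\]
In the circulant configuration, the inductive hypothesis that rows $1,\ldots,i$ are supported only on their subdiagonals (with the corner $A_{1,n}=w'$ added to row~$1$) collapses both sides to $A_{i+1,j}=0$ for every $j>i$, extending the structure to row~$i+1$; the resulting shift-plus-corner-twist matrix is a $w_j$-skew circulant, giving $m$ candidates, one per $w'\in\{w_1,\ldots,w_m\}$. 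In the tridiagonal configuration, the hypothesis that rows $1,\ldots,i$ are tridiagonal with subdiagonal $1$, diagonal $0$, and superdiagonal $w$ reduces the identity to $A_{i+1,j}=w\,\overline{A_{j,i+1}}$ for $j>i$, whence $A_{i+1,i+2}=w$ (from $j=i+2$, where $\overline{A_{i+2,i+1}}=1$) and $A_{i+1,j}=0$ for $j>i+2$ (since $A_{j,i+1}$ then lies below the subdiagonal), yielding the symmetric matrix when $w=1$ and a $w_j$-skew symmetric matrix otherwise, for $m$ further candidates. Direct computation confirms normality of each of the $2m$ matrices (the circulant family is unitary; for the tridiagonal family, $\mathbf{A}^*\mathbf{A}$ and $\mathbf{A}\mathbf{A}^*$ are pentadiagonal with matching entries), and the small cases $n=1$ and $n=2$ are handled by inspection, with coalescence at $n=2$ accounting for the parenthetical remark.

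The main obstacle will be the final induction, since the normality identity couples four separate summation ranges and one must carefully track which contributions survive under each inductive hypothesis. Once the Step~2 dichotomy is in place, however, the overall matrix structure is completely determined by whether the unique nonzero in row~$1$ lies at position $(1,2)$ or at position $(1,n)$.
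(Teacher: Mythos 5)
Your proposal is correct, and it reaches the theorem by a genuinely different route than the paper. The paper partitions $\mathbf{A}_n$ into the block form $\left[\begin{smallmatrix}0 & \mathbf{T}^*\\ \mathbf{e} & \mathbf{A}_{n-1}\end{smallmatrix}\right]$, reads off the block normality conditions, proves your same first lemma (row one carries exactly one unimodular entry $\tau$), and then splits on whether the trailing principal submatrix $\mathbf{A}_{n-1}$ is itself normal: if it is, $\mathbf{T}\mathbf{T}^*=\mathbf{e}\mathbf{e}^*$ forces the nonzero into position $(1,2)$ and a downward recursion on principal submatrices yields the $\tau$-skew symmetric tridiagonal matrix; if it is not, the commutator $\mathbf{A}_{n-1}\mathbf{A}_{n-1}^*-\mathbf{A}_{n-1}^*\mathbf{A}_{n-1}=\mathbf{T}\mathbf{T}^*-\mathbf{e}\mathbf{e}^*$ is used to show the first row of $\mathbf{A}_{n-1}$ vanishes and $\mathbf{A}_{n-1}\mathbf{T}=\mathbf{A}_{n-1}^*\mathbf{e}=0$ pushes the nonzero to position $(1,n)$, yielding the skew-circulant. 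Your dichotomy is the same one (position $2$ versus position $n$ for the unique nonzero of row one), but you establish it by four explicit scalar normality identities at $(2,2)$, $(1,k_1+1)$, $(n-1,n-1)$ and $(k_n-1,n)$ rather than by the normal/non-normal split of the trailing block, and you exploit the column-$n$ analogue of the first-row lemma, which the paper never states. Your concluding step is a forward row-by-row induction from a single entrywise identity, whereas the paper's is a backward recursion on $\mathbf{A}_{n-1},\mathbf{A}_{n-2},\dots$ terminated by an explicit $n=3$ computation. What your version buys is uniformity and completeness: one identity drives both configurations, the circulant case is carried all the way through (the paper's treatment of that case is the terser of the two), and the verification that the $2m$ candidates really are normal is stated explicitly. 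What the paper's version buys is a structural insight your argument hides, namely that the skew-symmetric family is exactly the one whose trailing principal submatrices are all normal, while the skew-circulant family has a trailing submatrix whose defect from normality is precisely $\mathbf{T}\mathbf{T}^*-\mathbf{e}\mathbf{e}^*$. As you anticipate, the only place requiring real care is the bookkeeping of the four summation ranges in the induction; with the first-row and last-column lemmas already in hand, every stray term there does vanish, so the argument closes.
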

\begin{proof}
To prove this theorem, we establish a sequence of lemmas. First, we partition $\mathbf{A}_{n}$. Put
\begin{equation}
    \mathbf{A}_{n} =
    \begin{bmatrix}
        0 & \mathbf{T}^{*} \\
        \mathbf{e} & \mathbf{A}_{n-1}
    \end{bmatrix}
\end{equation}
where
\begin{equation}
    \mathbf{e}^{*} =
    \begin{bmatrix}
        1 & 0 & \cdots & 0
    \end{bmatrix}
\end{equation}
and
\begin{equation}
    \mathbf{T}^{*} =
    \begin{bmatrix}
        t_{12} & t_{13} & \cdots & t_{1n}
    \end{bmatrix}
    =
    \begin{bmatrix}
        t_{21}^{*} & t_{31}^{*} \cdots t_{n1}^{*}
    \end{bmatrix}^{*} \>.
\end{equation}
Then the conditions of normality are
\begin{equation}
    \mathbf{A}_{n}\mathbf{A}_{n}^{*} =
    \begin{bmatrix}
        \mathbf{T}^{*}\mathbf{T} & \mathbf{T}^{*}\mathbf{A}_{n-1}^{*} \\
        \mathbf{A}_{n-1} & \mathbf{e}\mathbf{e}^{*} + \mathbf{A}_{n-1}\mathbf{A}_{n-1}^{*}
    \end{bmatrix}
\end{equation}
must equal
\begin{equation}
    \mathbf{A}_{n}^{*}\mathbf{A}_{n} =
    \begin{bmatrix}
        1 & \mathbf{e}^{*}\mathbf{A}_{n-1} \\
        \mathbf{A}_{n-1}^{*}\mathbf{e} & \mathbf{T}\mathbf{T}^{*} + \mathbf{A}_{n-1}^{*}\mathbf{A}_{n-1}
    \end{bmatrix} \>.
\end{equation}
\end{proof}

\begin{lemma}
\label{lemma:nonzero_Mn}
The first row of $\mathbf{A}_{n}$ contains exactly one nonzero element, say $\tau$ in position $j$ $(2 \leq j \leq n)$.
\end{lemma}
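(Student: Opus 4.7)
The plan is to compare the $(1,1)$ entries of the two matrix products $\mathbf{A}_n^*\mathbf{A}_n$ and $\mathbf{A}_n\mathbf{A}_n^*$ already assembled in block form just above the lemma statement, and to exploit the fact that the population $P$ consists of $0$ together with points on the unit circle.

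First I would read off the scalar $(1,1)$ block on each side. From $\mathbf{A}_n\mathbf{A}_n^*$ it equals $\mathbf{T}^*\mathbf{T} = \sum_{k=2}^n |t_{1k}|^2$, which is just the squared Euclidean norm of the first row of $\mathbf{A}_n$ (the diagonal entry $t_{11}$ is $0$ because $\mathbf{A}_n \in \ZH{0}{n}{P}$). From $\mathbf{A}_n^*\mathbf{A}_n$ it equals $\mathbf{e}^*\mathbf{e} = 1$, since the first column of $\mathbf{A}_n$ is $[0,1,0,\ldots,0]^\top$: the diagonal entry is $0$, the subdiagonal entry is $s = e^{i\cdot 0} = 1$, and every entry below the subdiagonal vanishes by the upper Hessenberg shape. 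Normality therefore collapses to the single scalar identity
\begin{equation*}
    \sum_{k=2}^n |t_{1k}|^2 \;=\; 1.
\end{equation*}

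Second, I would close the argument by a combinatorial observation. Each $t_{1k}$ lies in $P = \{0, w_1, \ldots, w_m\}$ with every $|w_j| = 1$, so $|t_{1k}|^2 \in \{0,1\}$. A sum of terms drawn from $\{0,1\}$ that equals $1$ must have exactly one summand equal to $1$ and the rest equal to $0$. Hence exactly one entry $t_{1j}$ of the first row is nonzero, and that entry is some unit $\tau \in \{w_1, \ldots, w_m\}$; the column index $j$ automatically satisfies $2 \le j \le n$ because $t_{11} = 0$. I expect no serious obstacle: the only substantive idea is that the $\{0\}\cup\{\text{unit-modulus}\}$ alphabet converts the unit $\ell^2$-norm constraint on the first row into a counting statement. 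This lemma in turn seeds the later lemmas, in which the location $j$ and value $\tau$ will be pushed (via further $(1,k)$ comparisons of $\mathbf{A}_n^*\mathbf{A}_n$ and $\mathbf{A}_n\mathbf{A}_n^*$) into the symmetric, $w_j$-skew symmetric, or $w_j$-skew circulant dichotomy claimed by the theorem.
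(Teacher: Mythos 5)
Your proposal is correct and matches the paper's own proof essentially verbatim: both equate the $(1,1)$ entries of $\mathbf{A}_n\mathbf{A}_n^*$ and $\mathbf{A}_n^*\mathbf{A}_n$ from the block partition to get $\mathbf{T}^*\mathbf{T} = \sum_{j=2}^n |t_{1j}|^2 = 1$, then use the fact that every nonzero element of $P$ has unit modulus to conclude exactly one entry is nonzero. No differences worth noting.
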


\begin{proof}
\begin{equation}
    \mathbf{T}^{*}\mathbf{T} = \sum_{j=2}^{n} \left|t_{ij}\right|^{2} = 1
\end{equation}
from the upper left corner. Since each nonzero element of $P$ has magnitude $1$, exactly one entry must be nonzero.
\end{proof}

\begin{lemma}
If $\mathbf{A}_{n-1}$ is normal then $\mathbf{T} = \tau\mathbf{e}$ and $\mathbf{A}_{n}$ is $\tau$-skew symmetric.
\end{lemma}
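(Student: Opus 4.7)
The plan is to unpack the four block equalities arising from $\mathbf{A}_{n}\mathbf{A}_{n}^{*}=\mathbf{A}_{n}^{*}\mathbf{A}_{n}$ under the partition introduced just above the statement. The $(2,2)$-block identity reads
\begin{equation*}
\mathbf{e}\mathbf{e}^{*}+\mathbf{A}_{n-1}\mathbf{A}_{n-1}^{*}=\mathbf{T}\mathbf{T}^{*}+\mathbf{A}_{n-1}^{*}\mathbf{A}_{n-1};
\end{equation*}
when $\mathbf{A}_{n-1}$ is normal the two Gram terms cancel, leaving the rank-one identity $\mathbf{T}\mathbf{T}^{*}=\mathbf{e}\mathbf{e}^{*}$. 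Comparing diagonal entries gives $|T_{i}|^{2}=\delta_{i,1}$, whence $T_{i}=0$ for $i\ge 2$ and $|T_{1}|=1$; setting $\tau:=T_{1}$ produces $\mathbf{T}=\tau\mathbf{e}$ with $|\tau|=1$, which is the first conclusion.

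For the second conclusion, I read ``$\tau$-skew symmetric'' as $\mathbf{A}_{n}^{*}=\tau\mathbf{A}_{n}$, the natural $|\tau|=1$ generalization of the Hermitian/skew-Hermitian relations. Substituting $\mathbf{T}=\tau\mathbf{e}$ into the $(1,2)$-block equation of normality forces $\overline{\tau}\,\mathbf{e}^{*}\mathbf{A}_{n-1}^{*}=\mathbf{e}^{*}\mathbf{A}_{n-1}$. Since $\mathbf{A}_{n-1}\in\ZH{0}{n-1}{P}$, its first column is $[0,1,0,\ldots,0]^{\top}$, so this identity pins the first row of $\mathbf{A}_{n-1}$ as $[0,\overline{\tau},0,\ldots,0]$. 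Consequently $\mathbf{A}_{n-1}$ itself has the very same block shape
\begin{equation*}
\mathbf{A}_{n-1} = \begin{bmatrix} 0 & \overline{\tau}\,\mathbf{e}^{*} \\ \mathbf{e} & \mathbf{A}_{n-2} \end{bmatrix}
\end{equation*}
with exactly the same $\tau$.

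Next I would run a finite descent. Expanding the block product in $\mathbf{A}_{n-1}\mathbf{A}_{n-1}^{*}=\mathbf{A}_{n-1}^{*}\mathbf{A}_{n-1}$, the rank-one contributions $|\tau|^{2}\mathbf{e}\mathbf{e}^{*}$ and $\mathbf{e}\mathbf{e}^{*}$ arising from the new off-diagonal block cancel (because $|\tau|=1$), so the $(2,2)$-block yields normality of $\mathbf{A}_{n-2}$ and the $(1,2)$-block again pins its first row as $[0,\overline{\tau},0,\ldots,0]$. Iterating, this descends through $\mathbf{A}_{n-3},\mathbf{A}_{n-4},\ldots$ all the way to the base case
\begin{equation*}
\mathbf{A}_{2} = \begin{bmatrix} 0 & \overline{\tau} \\ 1 & 0 \end{bmatrix},
\end{equation*}
which manifestly satisfies $\mathbf{A}_{2}^{*}=\tau\mathbf{A}_{2}$. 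Building back up, a one-line block computation using $|\tau|^{2}=1$ shows $\mathbf{A}_{k}^{*}=\tau\mathbf{A}_{k}\Longrightarrow\mathbf{A}_{k+1}^{*}=\tau\mathbf{A}_{k+1}$; taking $k=n-1$ then concludes $\mathbf{A}_{n}^{*}=\tau\mathbf{A}_{n}$.

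The main obstacle is recognizing the cascade: the single normality hypothesis one level down already propagates to every nested principal submatrix, because the off-diagonal block identity dictates the shape of each first row while the diagonal block identity hands normality down to the next level. Once this cascade is identified, the remainder is routine $2\times 2$-block bookkeeping using only $|\tau|=1$.
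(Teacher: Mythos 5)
Your proposal is correct and follows essentially the same route as the paper: extract $\mathbf{T}\mathbf{T}^{*}=\mathbf{e}\mathbf{e}^{*}$ from the $(2,2)$ block once the Gram terms cancel, use the $(1,2)$ block to pin the first row of $\mathbf{A}_{n-1}$ as $\begin{bmatrix}0 & \tau^{*} & 0 & \cdots & 0\end{bmatrix}$, and cascade normality down through the nested trailing submatrices. Your explicit descent-to-$\mathbf{A}_{2}$ followed by the build-up verifying $\mathbf{A}_{k}^{*}=\tau\mathbf{A}_{k}$ is in fact a slightly more complete rendering of the step the paper only sketches as ``the outline of an induction.''
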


\begin{proof}
If $\mathbf{A}_{n-1}$ is normal, then $\mathbf{T}\mathbf{T}^{*} + \mathbf{A}_{n-1}^{*}\mathbf{A}_{n-1}$ being equal to $\mathbf{e}\mathbf{e}^{*} + \mathbf{A}_{n-1}\mathbf{A}_{n-1}^{*}$ implies $\mathbf{T}\mathbf{T}^{*} = \mathbf{e}\mathbf{e}^{*}$ so that $\mathbf{T}^{*} = \begin{bmatrix}\tau^{*} & 0 & \cdots & 0\end{bmatrix}$ for some $\tau$ with $\left|\tau\right| = 1$. Then 
\begin{equation}
\mathbf{T}^{*}\mathbf{A}_{n-1}^{*} = \mathbf{e}^{*}\mathbf{A}_{n-1} \Rightarrow \tau^{*}\begin{bmatrix} 1 & 0 & \cdots & 0 \end{bmatrix}\mathbf{A}_{n-1}^{*} = \mathbf{e}^{*}\mathbf{A}_{n-1}^{*}
\end{equation}
and this says $\tau^{*}$ times the first row of $\mathbf{A}_{n-1}^{*}$ is the first row of $\mathbf{A}_{n-1}$.

But the first row of $\mathbf{A}_{n-1}^{*}$ is $\begin{bmatrix}0 & 1 & 0 & \cdots & 0\end{bmatrix}$ because $\mathbf{A}_{n-1}$ is upper Hessenberg with zero diagonal. Thus the first row of $\mathbf{A}_{n-1}$ is $\begin{bmatrix}0 & \tau^{*} & 0 & \cdots & 0\end{bmatrix}$. Thus
\begin{equation}
    \mathbf{A}_{n} =
    \left[
        \begin{array}{c|c|c}
            0 & \tau^{*} & \\
            \hline
            1 & 0 & \begin{array}{ccc}\tau^{*} & &\end{array} \\
            \hline
            & \begin{array}{c}1 \\ \\\end{array} & \mathbf{A}_{n-2}
        \end{array}
    \right]
    \quad \text{(remember $n \geq 3$)}
\end{equation}
and
\begin{equation}
    \mathbf{A}_{n-1}=
    \left[
        \begin{array}{cc}
            0 & \begin{array}{ccc} \tau^{*} & & \end{array} \\
            \begin{array}{c} 1 \\ \\ \end{array} & \mathbf{A}_{n-2}
        \end{array}
    \right]
\end{equation}
is normal. Because $\mathbf{A}_{n-1}$ is normal, and 
\begin{equation}
    \mathbf{A}_{n-1}^{*} = 
    \left[
        \begin{array}{ccc}
            0 & 1 & \\
            \tau & 0 & \begin{array}{ccc}1 & & \end{array} \\
            & \begin{array}{c}\tau \\ \\ \end{array} & \mathbf{A}_{n-2}^{*}
        \end{array}
    \right]
\end{equation}
we have $\mathbf{A}_{n-1}^{*}\mathbf{A}_{n-1} = \mathbf{A}_{n-1}\mathbf{A}_{n-1}^{*}$ or
\begin{multline*}
    \left[
        \begin{array}{ccc}
            0 & 1 & \\
            \tau & 0 & \begin{array}{ccc} 1 & & \end{array} \\
            & \begin{array}{c}1 \\ \\ \end{array} & \mathbf{A}_{n-2}
        \end{array}
    \right]
    \left[
        \begin{array}{ccc}
            0 & \tau^{*} & \\
            1 & 0 & \begin{array}{ccc}\tau^{*} & & \end{array} \\
            & \begin{array}{c}1 \\ \\ \end{array} & \mathbf{A}_{n-2}
        \end{array}
    \right] \\
    =
    \left[
        \begin{array}{ccl}
            1 & 0 & \tau^{*} \\
            0 & 2 & \mathbf{e}_{n-2}^{*}\mathbf{A}_{n-2} \\
            \begin{array}{c}\tau \\ \\ \end{array} & \tau\mathbf{A}_{n-2}^{+}\mathbf{e}_{n-2} & \mathbf{e}\mathbf{e}^{*} + \mathbf{A}_{n-2}^{*}\mathbf{A}_{n-2}
        \end{array}
    \right]
\end{multline*}
must equal
\begin{multline*}
    \left[
        \begin{array}{ccc}
            0 & \tau^{*} & \\
            1 & 0 & \begin{array}{ccc}\tau^{*} & & \end{array} \\
            & \begin{array}{c}1 \\ \\ \end{array} & \mathbf{A}_{n-2}
        \end{array}
    \right]
    \left[
        \begin{array}{ccc}
            0 & 1 & \\
            \tau & 0 & \begin{array}{ccc} 1 & & \end{array} \\
            & \begin{array}{c}1 \\ \\ \end{array} & \mathbf{A}_{n-2}
        \end{array}
    \right] \\
    =
    \left[
        \begin{array}{ccl}
            1 & 0 & \tau^{*} \\
            0 & 2 & \mathbf{e}_{n-2}^{*}\mathbf{A}_{n-2} \\
            \begin{array}{c}\tau \\ \\ \end{array} & \tau\mathbf{A}_{n-2}^{+}\mathbf{e}_{n-2} & \mathbf{e}\mathbf{e}^{*} + \mathbf{A}_{n-2}^{*}\mathbf{A}_{n-2}
        \end{array}
    \right] \>.
\end{multline*}
The lower left block gives $\mathbf{e}\mathbf{e}^{*} + \mathbf{A}_{n-2}\mathbf{A}_{n-2}^{*} = \mathbf{e}\mathbf{e}^{*} + \mathbf{A}_{n-2}^{*}\mathbf{A}_{n-2}$ so $\mathbf{A}_{n-2}$ must also be normal.

At this point, we see the outline of an induction:
\begin{equation}
    \mathbf{A}_{n} =
    \left[
        \begin{array}{c|c}
            0 & \begin{array}{ccc}\tau^{*} & & \end{array} \\
            \hline
            \begin{array}{c}1 \\ \\ \end{array} & \mathbf{A}_{n-1}
        \end{array}
    \right]
\end{equation}
being normal with $\mathbf{A}_{n-1}$ being normal implies that
\begin{equation}
    \mathbf{A}_{n-1} =
    \left[
        \begin{array}{c|c}
            0 & \begin{array}{ccc}\tau^{*} & & \end{array} \\
            \hline
            \begin{array}{c}1 \\ \\ \end{array} & \mathbf{A}_{n-2}
        \end{array}
    \right]
\end{equation}
where $\mathbf{A}_{n-2}$ is normal. Explicit computation of the $n=3$ case shows the induction terminates.
\end{proof}

We now consider the harder case where
\begin{equation}
    \mathbf{A}_{n} =
    \begin{bmatrix}
        0 & \mathbf{T}^{*} \\
        \mathbf{e}_{n-1} & \mathbf{A}_{n-1}
    \end{bmatrix}
\end{equation}
but where $\mathbf{A}_{n-1}$ is not itself normal. From Lemma \ref{lemma:nonzero_Mn} we know that $\mathbf{T}^{*}$ has only one nonzero element; call it $\tau^{*}$ as before. Then
\begin{equation}
    \mathbf{T}\mathbf{T}^{*} =
    \begin{bmatrix}
        0 & & & & & & \\
        & \ddots & & & & & \\
        & & 0 & & & & \\
        & & & 1 & & & \\
        & & & & 0 & & \\
        & & & & & \ddots & \\
        & & & & & & 0
    \end{bmatrix}
\end{equation}
while
\begin{equation}
    \mathbf{e}\mathbf{e}^{*} =
    \begin{bmatrix}
        1 & & & \\
        & 0 & & \\
        & & \ddots & \\
        & & & 0
    \end{bmatrix} \>,
\end{equation}
and we may assume that the $1$ in $\mathbf{T}\mathbf{T}^{*}$ does not occur in the first row and column (else we are in the previous case, and $\mathbf{A}_{n-1}$ will be normal). Here 
\begin{equation}
    \label{eq:Mnormal}
    \mathbf{A}_{n-1}\mathbf{A}_{n-1}^{*} - \mathbf{A}_{n-1}^{*}\mathbf{A}_{n-1} = \mathbf{T}\mathbf{T}^{*} - \mathbf{e}\mathbf{e}^{*} = 
    \begin{bmatrix}
        -1 & & & & & & & \\
        & 0 & & & & & & \\
        & & \ddots & & & & & \\
        & & & 0 & & & & \\
        & & & & 1 & & & \\
        & & & & & 0 & & \\
        & & & & & & \ddots & \\
        & & & & & & & 0
    \end{bmatrix}
\end{equation}
is the departure of $\mathbf{A}_{n-1}$ from normality. We will establish that in fact 
\begin{equation}
    \mathbf{T}^{*} =
    \begin{bmatrix}
        0 & 0 & 0 & \cdots & 0 & \tau^{*}
    \end{bmatrix}
\end{equation}
and that
\begin{equation}
    \mathbf{A}_{n-1} =
    \begin{bmatrix}
        0 & & & & \\
        1 & 0 & & & \\
        & 1 & 0 & & \\
        & & \ddots & \ddots & \\
        & & & 1 & 0
    \end{bmatrix} \>;
\end{equation}
that is, the nonzero element can only occur in the last place. Notice that the upper left corner of \ref{eq:Mnormal} is, if the top row of $\mathbf{A}_{n-1}$ is $\begin{bmatrix}0 & a_{1,2} & a_{1,3} & \cdots a_{1,n-1}\end{bmatrix}$,
\begin{equation}
    \sum_{j=2}^{n-1} \left|a_{1,j}\right|^{2} - 1 \>.
\end{equation}
Therefore, all $a_{1,j} = 0$ and the first row of $\mathbf{A}_{n-1}$ must be zero: i.e.
\begin{equation}
    \mathbf{A}_{n-1} =
    \begin{bmatrix}
        0 & 0 & 0 & \cdots & 0 \\
        1 & 0 & a_{3,3} & \cdots & a_{2, n-1} \\
        & 1 & 0 & \ddots & \vdots \\
        & & \ddots & \ddots & a_{n-2, n-1} \\
        & & & 1 & 0 
    \end{bmatrix}
\end{equation}
Then,
\begin{equation}
    \mathbf{A}_{n-1}\mathbf{T} = \mathbf{A}_{n-1}^{*}\mathbf{e} = 
    \begin{bmatrix}
        0 & 1 & & \\
        0 & 0 & \ddots \\
        \vdots & & \ddots & 1 \\
        0 & \cdots & \cdots & 0 
    \end{bmatrix}
    \begin{bmatrix}
        1 \\
        0 \\
        \vdots \\
        0
    \end{bmatrix}
    =
    \begin{bmatrix}
        0 \\
        0 \\
        \vdots \\
        0
    \end{bmatrix} \>.
\end{equation}
If
\begin{equation}
    \mathbf{T} =
    \begin{bmatrix}
        0 \\
        0 \\
        \vdots \\
        0 \\
        \tau \\
        0 \\
        \vdots \\
        0
    \end{bmatrix}\>,
\end{equation}
then
\begin{equation}
    \mathbf{A}_{n-1}\mathbf{T} =
    \begin{bmatrix}
        0 \\
        \tau a_{2,j} \\
        \vdots \\
        \tau a_{j-1, j} \\
        0 \\
        \tau \\
        0 \\
        \vdots \\
        0
    \end{bmatrix}\>,
\end{equation}
which is impossible unless $j=n$ (when the $\tau$ term is not present). Therefore,
\begin{equation}
    \mathbf{A}_{n-1} =
    \begin{bmatrix}
        0 & 0 & \cdots & 0 & 0 \\
        1 & x & \cdots & x & 0 \\
        & 1 & \ddots & \vdots & \vdots \\
        & & \ddots & x & 0 \\
        & & & 1 & 0
    \end{bmatrix}
    =
    \begin{bmatrix}
        0 & 0 \\
        \mathbf{U} & 0
    \end{bmatrix} \>,
\end{equation}
and
\begin{equation}
    \mathbf{A}_{n-1}\mathbf{A}_{n-1}^{*} - \mathbf{A}_{n-1}^{*}\mathbf{A}_{n-1} =
    \begin{bmatrix}
        -1 & & & & \\
        & 0 & & & \\
        & & \ddots & & \\
        & & & 0 & \\
        & & & & 1
    \end{bmatrix} \>.
\end{equation}
Since
\begin{equation}
    \mathbf{A}_{n-1}^{*} =
    \begin{bmatrix}
        0 & \mathbf{U}^{*} \\
        0 & 0
    \end{bmatrix}
\end{equation}
and
\begin{equation}
    \mathbf{A}_{n-1}\mathbf{A}_{n-1}^{*} =
    \begin{bmatrix}
        0 & 0 \\
        0 & \mathbf{U}\mathbf{U}^{*}
    \end{bmatrix}
\end{equation}
and
\begin{equation}
    \mathbf{A}_{n-1}^{*}\mathbf{A}_{n-1} =
    \begin{bmatrix}
        \mathbf{U}^{*}\mathbf{U} & 0 \\
        0 & 0
    \end{bmatrix} \>,
\end{equation}
\begin{equation}
    \begin{bmatrix}
        0 & 0 \\
        0 & \mathbf{U}\mathbf{U}^{*}
    \end{bmatrix}
    -
    \begin{bmatrix}
        \mathbf{U}^{*}\mathbf{U} & 0 \\
        0 & 0
    \end{bmatrix}
\end{equation}
must be diagonal. Therefore, the first row of $\mathbf{U}\mathbf{U}^{*}$ must be zero except for the first element.

\begin{remark}
    For $n = 4$, and $P = \{0, i, -i\}$ ($m = 2$) the following 4 matrices are normal:
    
    \begin{center}
    \begin{tabular}{|*1{>{\centering\arraybackslash}p{.05\textwidth}|}*2{>{\centering\arraybackslash}p{.3\textwidth}|}}
    \hline
    $w_j$ & $w_j$-skew symmetric & $w_j$-skew circulant\\ \hline
    \[i\] &
    \[\begin{bmatrix}
        0 & i & 0 & 0\\
        1 & 0 & i & 0\\
          & 1 & 0 & i\\
          &   & 1 & 0
    \end{bmatrix}\] & 
    \[\begin{bmatrix}
        0 & 0 & 0 & i\\
        1 & 0 & 0 & 0\\
          & 1 & 0 & 0\\
          &   & 1 & 0
    \end{bmatrix}\]\\ \hline
    \[-i\] &
    \[\begin{bmatrix}
        0 & -i & 0 & 0\\
        1 & 0 & -i & 0\\
          & 1 & 0 & -i\\
          &   & 1 & 0
    \end{bmatrix}\] & 
    \[\begin{bmatrix}
        0 & 0 & 0 & -i\\
        1 & 0 & 0 & 0\\
          & 1 & 0 & 0\\
          &   & 1 & 0
    \end{bmatrix}\]\\ \hline
    \end{tabular}
    \end{center}
\end{remark}

%
\section{Stable Matrices}
\newcommand{\A}{\ensuremath{\mathbf{A}}}
\subsection{Type I Stable Matrices}
A \textsl{Type I stable matrix} $\A$
is a matrix with all of its eigenvalues strictly in the left half plane: if $\lambda$ is an eigenvalue of~$A$ then $\Re(\lambda) < 0$. This nomenclature comes from differential equations, in that all solutions of the linear system of ODEs $dy/dt = \A y$ will ultimately decay as $t \to \infty$ if $\A$ is a type I stable matrix. 

If the matrix $\A$ is not \textsl{normal}, then \textsl{pseudospectra} can play a role, in that even though all solutions $y$ must ultimately decay, they might first grow large.  See~\cite{Embree:HLA:2013} for details.

By Theorem~\ref{thm:zero_diag_UH}, only $2m$ of the zero diagonal upper Hessenberg matrices with population $P = \{-1, 0, {+1}\}$ are normal, where here $m=2$.  Similarly, when the population is $P=\{0, {+1}\}$ then $m=1$ and only two matrices of every dimension are normal (the symmetric matrix with $1$s on its upper diagonal, and the circulant matrix with a $1$ in the last column of the first row).

\begin{theorem}
\label{thm:not_stable}
No $\mathbf{A}_n \in \ZH{\theta_k}{n}{P}$ is Type I stable, for any population $P$.
\end{theorem}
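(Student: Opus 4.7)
The plan is quick: invoke the trace identity. The trace of any matrix equals the sum of its eigenvalues counted with algebraic multiplicity. For $\mathbf{A}_n \in \ZH{\theta_k}{n}{P}$ the diagonal entries are fixed at $0$ by the very definition of $\ZH{\theta_k}{n}{P}$, so $\operatorname{tr}(\mathbf{A}_n) = 0$. Writing the eigenvalues as $\lambda_1, \ldots, \lambda_n$ (with multiplicity), this yields
\begin{equation*}
    \sum_{i=1}^{n} \lambda_i = 0, \qquad \text{and hence} \qquad \sum_{i=1}^{n} \Re(\lambda_i) = 0.
\end{equation*}

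If $\mathbf{A}_n$ were Type I stable, then every $\Re(\lambda_i) < 0$ strictly, forcing $\sum_{i=1}^n \Re(\lambda_i) < 0$. This contradicts the vanishing of the sum of real parts derived from the zero trace, so no such $\mathbf{A}_n$ can be Type I stable.

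There is essentially no obstacle. The only remark worth making is that the same trace argument immediately rules out Type I stability for any upper Hessenberg Bohemian family whose diagonal sums to a nonnegative real number; the zero-diagonal restriction defining $\ZH{\theta_k}{n}{P}$ is simply the cleanest instance. In particular, the argument is independent of the subdiagonal angles $\theta_k$ and of the population $P$, which is exactly the generality claimed in the statement.
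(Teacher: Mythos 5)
Your proof is correct and is essentially identical to the paper's: both use $\operatorname{tr}(\mathbf{A}_n)=0$ to conclude $\sum_i \Re(\lambda_i)=0$, which is incompatible with all real parts being strictly negative. No issues.
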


\begin{proof}
Suppose $\mathbf{A}_n \in \ZH{\theta_k}{n}{P}$ has eigenvalues $\{\lambda_{k}\}_{k=1}^{n}$. Then 
\begin{equation}
\sum_{k=1}^{n}\lambda_{k} = \mathrm{Trace}(\mathbf{A}_n) = 0\>. 
\end{equation}
Therefore, $\sum_{k=1}^{n}\mathrm{Re}(\lambda_{k}) = 0$. This is $n$ times the average, and so the average is zero.  Since the maximum $\mathrm{Re}(\lambda_{k})$ must be larger than the average, this proves the theorem.
\end{proof}
The proof of this theorem did not depend on the structure or population.  Thus if we consider
$\HH{0}{n}{P}$ instead of $\ZH{0}{n}{P}$, then we may simplify our search for stable matrices by restricting the computation to those with negative trace. This is in fact the first inequality of the Hurwitz criteria\footnote{The Maple command \texttt{PolynomialTools[Hurwitz]} implements a well-known test to decide if $p \in \mathbb{C}[z]$ has all its roots strictly in the left half plane.  Because that routine considers the complex case, and tests for pathological cases, it is too inefficient to use in this context.  We unrolled the loops, essentially converting the code to specific tests of the principal minors of the Hurwitz matrix.}, which leads to an effective and efficient method to count stable matrices: start from the database of characteristic polynomials~\cite{CPDB}, decide using the Hurwitz criteria if all roots are in the left half-plane, and if so add its matrices to the count.

\begin{table}[ht]
    \centering
    \begin{tabular}{|c|c|c|}
          \hline 
     $n$ & $\HH{0}{n}{\{-1, 0, {+1}\}}$  & $\HH{0}{n}{\{-1, {+1}\}}$  \\
        \hline 
    2   & 4  & 1 \\
          \hline 
    3   & 44 & 4\\
          \hline 
    4   & 1,386 & 28  \\
          \hline 
    5   & 130,735 & 424 \\
          \hline 
    6   & 35,217,156 &  11,613 \\
          \hline 
7   &  &  617,619 \\
          \hline 
    \end{tabular}
    \caption{The numbers of Type I stable matrices for various populations and dimensions. }
    \label{tab:stablesI}
\end{table}
\begin{figure}[h]
    \centering
    \includegraphics[width=\textwidth]{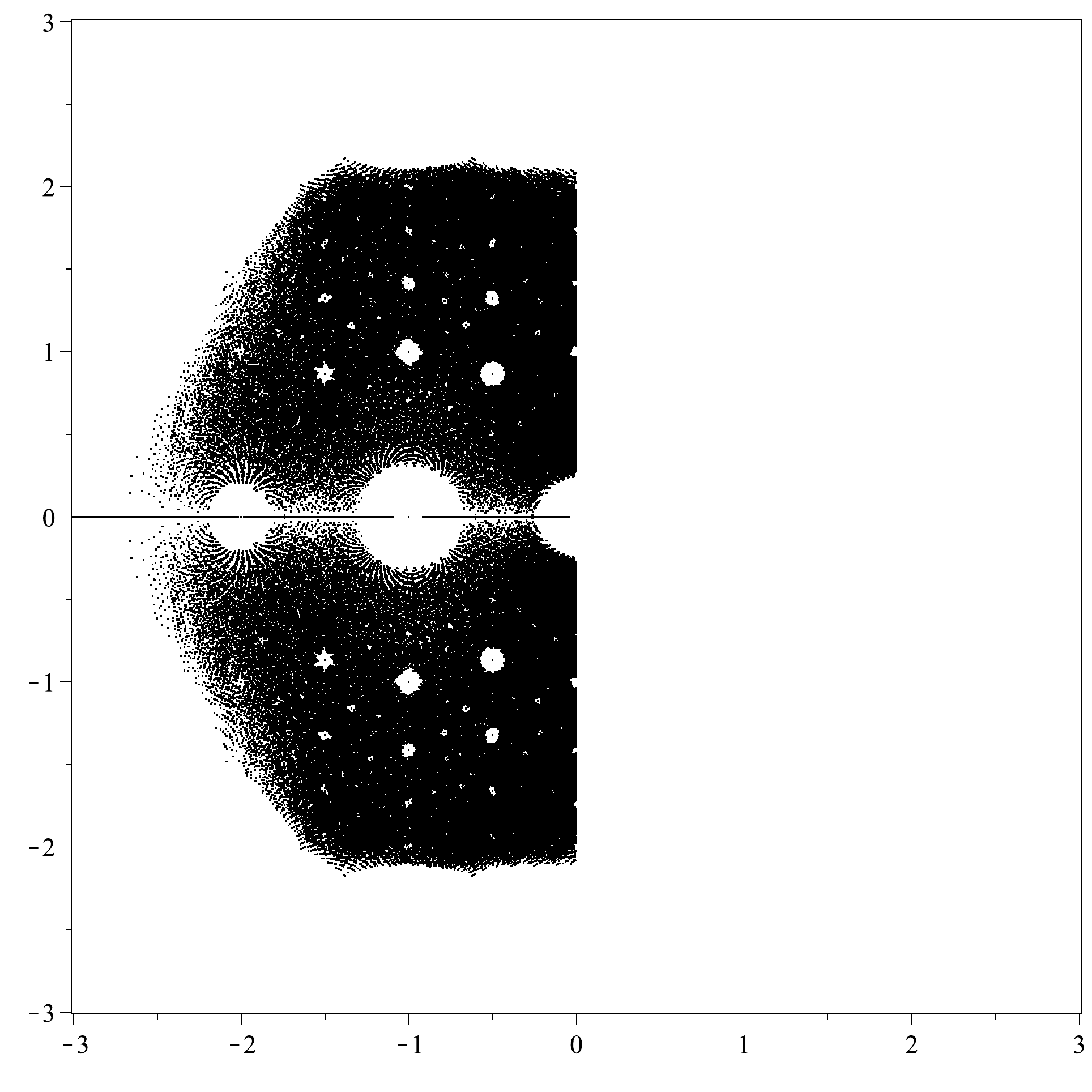}
    \caption{All eigenvalues of all 35,217,156 stable matrices from~\HH{0}{6}{\{-1, 0, {+1}\}}.  The maximum real part is approximately $-2.42\cdot 10^{-5}$.  There were only $55,298$ distinct characteristic polynomials from all these matrices.}
    \label{fig:stableHessenberg6}
\end{figure}

\begin{remark}
For stable matrices in $\HH{0}{n}{\{-1, 0, {+1}\}}$ the maximum real part of any eigenvalue is, for $n=2$, just $-0.5$ while for $n=3$ it is $-1.226\cdot 10^{-1}$.  For $n=4$ it is $-1.591\cdot10^{-2}$. For $n=5$ it is $-5.176\cdot 10^{-4}$.  For $n=6$ it is $-2.42\cdot 10^{-5}$.  The maximum real part of the eigenvalues seems to be approaching the real axis at least exponentially in $n$, for this population.  It would be nice to have a good asymptotic estimate.

The sequence of maximum real parts of eigenvalues for $\HH{0}{n}{\{-1, {+1}\}}$ gives at $n=2$ $\Re(\lambda)=-1$, $-0.5$, $-2.168\cdot 10^{-2}$, $-2.66\cdot 10^{-3}$, $-1.70\cdot10^{-4}$, and $-2.62\cdot10^{-6}$ for $n=7$.
\end{remark}

%
\subsection{Type II Stable matrices}
A \textsl{Type II Stable Matrix} $\A$ has all its eigenvalues inside the unit circle.  This class of matrices arises naturally on studying the simple linear recurrence relation $y_{n+1} = \A y_n$.
Fairly obviously, all solutions of this difference equation will ultimately decay to $0$ as $n \to \infty$ if and only if all eigenvalues of $\A$ are inside the unit circle (again, pseudospectra can play a role in the transient behaviour, sometimes significantly).

\begin{theorem}
If $\A$ is a Bohemian matrix with integer population $P$, then it is Type II stable if and only if it is nilpotent, in which case all its eigenvalues are $0$.
\end{theorem}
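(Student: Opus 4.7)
The plan is to handle the two directions of the biconditional separately. The $(\Leftarrow)$ direction is immediate: if $\mathbf{A}$ is nilpotent then some power $\mathbf{A}^{N}$ equals $\mathbf{0}$, so any eigenvalue $\lambda$ satisfies $\lambda^{N}=0$ and hence $\lambda=0$, which trivially lies strictly inside the unit circle.

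For $(\Rightarrow)$ my first choice is to exploit the integrality of $\mathbf{A}$ via Kronecker's theorem on algebraic integers. Since $P\subset\mathbb{Z}$, the characteristic polynomial $Q_{n}(z)=\det(z\mathbf{I}-\mathbf{A})$ is monic with integer coefficients, so each eigenvalue $\lambda_{i}$ is an algebraic integer. The Galois conjugates of $\lambda_{i}$ over $\mathbb{Q}$ are the roots of its minimal polynomial, which divides $Q_{n}$, hence are themselves eigenvalues of $\mathbf{A}$. Type II stability gives $|\lambda_{i}|<1$ for every $i$, so every algebraic conjugate of every eigenvalue lies strictly inside the open unit disk. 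Kronecker's theorem then forces each $\lambda_{i}$ to be either $0$ or a root of unity, and the strict inequality $|\lambda_{i}|<1$ rules out the latter. Hence $Q_{n}(z)=z^{n}$ and Cayley--Hamilton gives $\mathbf{A}^{n}=\mathbf{0}$.

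For readers who prefer a self-contained argument I would include an alternative power-sum proof that sidesteps Kronecker. Since $\mathbf{A}^{k}$ has integer entries, $p_{k}:=\mathrm{tr}(\mathbf{A}^{k})=\sum_{i}\lambda_{i}^{k}\in\mathbb{Z}$; if $r:=\max_{i}|\lambda_{i}|<1$ then $|p_{k}|\le nr^{k}\to 0$, so $p_{k}=0$ for all sufficiently large $k$, say $k\ge K$. Grouping by distinct nonzero eigenvalues $\mu_{1},\ldots,\mu_{s}$ with multiplicities $m_{j}\ge 1$ gives $p_{K+\ell}=\sum_{j=1}^{s}m_{j}\mu_{j}^{K+\ell}$ for $\ell=0,\ldots,s-1$; this is a Vandermonde system in the $s$ unknowns $m_{j}\mu_{j}^{K}$, and invertibility (distinct nonzero nodes) forces each $m_{j}\mu_{j}^{K}$ to vanish, which is impossible unless $s=0$. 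The main subtlety I would flag in this second route is simply that the argument is insensitive to the multiplicity of $0$ as an eigenvalue of $\mathbf{A}$: zero contributes nothing to any $p_{k}$ and so drops out of the Vandermonde reduction without altering the conclusion that every nonzero eigenvalue must actually be zero.
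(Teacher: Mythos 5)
Your proof is correct, but it takes a genuinely different route from the paper's. The paper argues directly from a single coefficient of the characteristic polynomial: writing $Q_n(z)=z^m\,\tilde Q(z)$ with $\tilde Q(0)\neq 0$, the product of the nonzero eigenvalues equals (up to sign) the integer $\tilde Q(0)$, which is therefore at least $1$ in magnitude, so some eigenvalue has modulus at least $1$ --- contradicting Type~II stability in one line. Your first route reaches the same conclusion via Kronecker's theorem on algebraic integers; this is valid (the minimal polynomial of each eigenvalue does divide $Q_n$ over $\mathbb{Z}$, so all conjugates are eigenvalues and hence of modulus strictly less than $1$), but it is heavier machinery than needed: in the strict-inequality case Kronecker's theorem itself reduces to exactly the paper's norm argument, namely that the nonzero integer constant term of the minimal polynomial is the product of the conjugates. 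Your second route, via the power sums $p_k=\mathrm{tr}(\mathbf{A}^k)\in\mathbb{Z}$ tending to $0$ and a Vandermonde elimination on the distinct nonzero eigenvalues, is self-contained and correct, and has the virtue of generalizing readily (it only uses that the traces of powers are integers), though it is longer than the determinant argument. One small point in your favour: you explicitly dispatch the easy $(\Leftarrow)$ direction and note that nilpotency follows from $Q_n(z)=z^n$ by Cayley--Hamilton, steps the paper leaves implicit. Like the paper, your argument uses only integrality of the entries, not the Bohemian structure.
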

\begin{proof}
Suppose to the contrary that some eigenvalues are not zero.

The determinant of $\A$ must necessarily be an integer.  If the integer is not zero, it is at least $1$ in magnitude.  The product of the eigenvalues is thus at least $1$ in magnitude; hence there must be at least one eigenvalue that is at least $1$ in magnitude.

If the matrix $\A$ has zero determinant but not all eigenvalues zero, then after factoring out $z^m$ for the multiplicity of the zero eigenvalue, the product of the other eigenvalues becomes the constant coefficient (what was the coefficient of $z^m$ in the original).  This coefficient again must be an integer, and again at least one eigenvalue must be at least $1$ in magnitude.

This proves the theorem, by contradiction.
\end{proof}

\begin{remark}
We did not, in fact, use that the matrix came from a Bohemian family; only that its entries were integers.
\end{remark}

Searching for nilpotent matrices in various classes of Bohemian matrices turns up several puzzles. We give some preliminary results here in Table~\ref{tab:nilpotents}, but leave this mostly to future work.  For instance, it seems clear from our experiments that the only nilpotent matrix in $\HH{0}{n}{\{0, {+1}\}}$ is the (transpose of the) complete Jordan block of $n$ zero eigenvalues; contrariwise the irregular behaviour for $\HH{0}{n}{\{-1, {+1}\}}$ is very puzzling.

\begin{table}[ht]
    \centering
    \begin{tabular}{|c|c|c|c|}
          \hline 
     $n$ & $\ZH{0}{n}{\{-1, 0, {+1}\}}$ & $\HH{0}{n}{\{0, {+1}\}}$ & $\HH{0}{n}{\{-1, {+1}\}}$  \\
        \hline 
    2   & 1 & 1 & 2 \\
          \hline 
    3   & 3& 1 & 0 \\
          \hline 
    4   & 21 & 1 & 0 \\
          \hline 
    5   & 271 & 1 & 0 \\
          \hline 
    6   & 9,075 & 1 & 324 \\
          \hline 
    \end{tabular}
    \caption{The numbers of nilpotent matrices for various populations and dimensions}
    \label{tab:nilpotents}
\end{table}

%
Considering \textsl{general} Bohemian matrices with population $\{-1, 0, {+1}\}$, so that there are $3^{n^2}$ such matrices, we find that there are $1$, $9$, $481$, $148,817$, and $243,782,721$ nilpotent matrices at dimensions $1$ through $5$ inclusive.  We can fit this experimentally with the formula $\exp(0.5 + 0.38n + 0.23n^2)$, or something like $1.26^{n^2}$, which vanishes very quickly compared to $3^{n^2}$. This formula predicts that for $n=6$ the probability of finding a nilpotent matrix is about $2.75\times 10^{-14}$.  It would be gratifying to have a better understanding of the number of nilpotent matrices in a family.



\section{Concluding Remarks}
The class of upper Hessenberg Bohemian matrices gives a useful way to study Bohemian matrices in general. This is an instance of Polya's adage ``find a useful specialization."~\cite[p.~190]{polya2014solve} Because these classes are simpler than the general case, we were able to establish several theorems. Note that the three families $\HH{0}{n}{\{0, {+1}\}}$, $\HH{0}{n}{\{-1, {+1}\}}$, and $\ZH{0}{n}{\{-1, 0, {+1}\}}$ are all subfamilies of $\HH{0}{n}{\{-1, 0, {+1}\}}$.

In this paper we have introduced two new formulae for computing the characteristic polynomials of upper Hessenberg matrices. Our first formula, given in Theorem~\ref{thm:charPolyRec1}, also computes the characteristic polynomials recursively. Our second formula, given in Theorem~\ref{thm:charPolyRec2}, computes the coefficients recursively. We also explored some properties of zero diagonal Bohemian upper Hessenberg matrices. In Theorem \ref{thm:zero_diag_UH}, we show that the subset of these matrices that are normal are always symmetric, $w_{j}$-skew symmetric for some fixed $1 \leq j \leq m$, or $w_{j}$-skew circulant. In Theorem \ref{thm:not_stable}, we showed that no $\mathbf{H} \in \ZH{\theta_{k}}{n}{P}$ is stable.

Many puzzles remain. Perhaps the most striking is the angular appearance of the set $\mathbf{\Lambda}(\HH{0}{n}{P})$ of eigenvalues of $\HH{0}{n}{P}$, such as in Figures~\ref{fig:UH_6} and \ref{fig:UH_6_0_Diag}. General matrices have eigenvalues asymptotic to a (scaled) disc~\cite{tao2017random}; our computations suggest that as $n \to \infty$, $\sfrac{\mathbf{\Lambda}(\HH{0}{n}{P})}{n^{\sfrac{1}{2}}}$ tends to an irregular hexagonal shape, rather than a disk. More, the density does not appear to be approaching uniformity. Further, the boundary is irregular, with shapes suggestive of what is popularly known as the ``dragon curve" (in reverse---these delineate where the eigenvalues are absent, near the edge). We have no explanation for this.

\section*{Acknowledgements}
The calculations and images presented here were in part made possible using AMD Threadripper workstations provided by the Department of Applied Mathematics at Western University.
We acknowledge the support of the Ontario Graduate Institution, The National Science \& Engineering Research Council of Canada, the University of
Alcal\'a, the Rotman Institute of Philosophy, the Ontario Research Centre of
Computer Algebra, and Western University. Part of this work was developed
while R.~M.~Corless was visiting the University of Alcal\'a, in the frame of the
project Giner de los Rios. L.~Gonzalez-Vega, J.~R.~Sendra and J.~Sendra are
partially supported by the Spanish Ministerio de Econom\'\i a y Competitividad
under the Project MTM2017-88796-P.

\bibliographystyle{siamplain}  
\bibliography{bibliography}

\end{document}